\def\arxiv{}
\newcommand{\citet}[1]{\textcite{#1}}
\renewcommand{\cite}[1]{\parencite{#1}}
\newtheorem{theorem}{Theorem}
\newtheorem{lemma}{Lemma}
\newtheorem{definition}{Definition}
\newtheorem{example}{Example}
\title{Strategic Facility Location with Clients That Minimize Total Waiting Time}
\author {
    Simon Krogmann\textsuperscript{\rm 1},
    Pascal Lenzner\textsuperscript{\rm 1},
    Alexander Skopalik\textsuperscript{\rm 2}
}
\newcommand{\s}{\mathbf{s}}
\newcommand{\limtmpmodel}{$2$-FLG}
\newcommand{\brokenmodel}{Min-$2$-FLG}
\newcommand{\feldmanmodel}{Uniform-$2$-FLG}
\newcommand{\brokenshort}{\textup{Min}}
\newcommand{\feldmanshort}{\textup{Uniform}}
\begin{document}

\maketitle

\begin{abstract}
We study a non-cooperative two-sided facility location game in which facilities and clients behave strategically.
This is in contrast to many other facility location games in which clients simply visit their closest facility.
Facility agents select a location on a graph to open a facility to attract as much purchasing power as possible, while client agents choose which facilities to patronize by strategically distributing their purchasing power in order to minimize their total waiting time. Here, the waiting time of a facility depends on its received total purchasing power.
We show that our client stage is an atomic splittable congestion game, which implies existence, uniqueness and efficient computation of a client equilibrium.
Therefore, facility agents can efficiently predict client behavior and make strategic decisions accordingly.
Despite that, we prove that subgame perfect equilibria do not exist in all instances of this game and that their existence is NP-hard to decide.
On the positive side, we provide a simple and efficient algorithm to compute 3-approximate subgame perfect equilibria.
\end{abstract}

\section{Introduction}
Facility location problems are a classical and popular object of study in AI, Operations Research, Economics, and Theoretical Computer Science. These problems encompass natural problems like placing facilities in a socially optimal way, e.g., the placement of hospitals, fire stations, or schools, or determining the locations of competing facilities, e.g., bars, shops, or gas stations. While the former type of problems are typically modeled as optimization problems and solved with a rich toolbox of combinatorial graph algorithms, this cannot be done for the latter problems, due to the strategic setting. Instead, models and methods from (Algorithmic) Game Theory are necessary to cope with facility location in competitive environments. The focus of this paper will be one such model that explicitly considers strategic behavior by the facilities as well as by the clients that want to patronize these facilities. Most importantly, we thereby consider a very natural type of client behavior that has, to the best of our knowledge, not yet been studied. 

The study of strategic facility location models dates back almost a century to the works of \citet{hotelling} who considered how to place two competing shops in a linear market, called the ``main street''.
Later, this model was refined by \citet{downs} and used for the placement of political candidates in a political left-right spectrum.
In the Hotelling-Downs model, competing but otherwise identical facilities strategically select a placement in some underlying space, e.g., on the line, to attract as many clients as possible.
The clients are modeled in a very simple way: clients are assumed to patronize their nearest opened facility.
However, while such simplistic clients are still considered in a wide range of models and recent works, such basic client agents do not seem to convincingly capture the behavior of real-world clients.
Abstracting away from more complex aspects like pricing, product quality, or social influences, realistic clients would not only consider the distance to a facility but also the incurred waiting time at the facility.
Interestingly, this waiting time depends on the number of other clients that also patronize the same facility.
Thus, in a more realistic setting we have that not only the facilities act strategically, but also the clients base their behavior on the behavior of the other clients instead of myopically minimizing distances.
Such more realistic clients have first been considered in the work of \citet{load-balancing}, who studied clients that minimize a convex combination of distance and waiting time.
However, this enhanced realism comes at a cost: while for the original Hotelling-Downs model equilibrium states always exist, except for exactly $3$ facilities, equilibria only exist for extreme cases of Kohlberg's model.

But, as demonstrated by~\citet{feldman-hotelling}, these negative results can be circumvented by considering a natural and interesting variant of the Hotelling-Downs model.
In this model, instead of minimizing the distance, every client has a threshold of how far she would travel to a facility and chooses to patronize a facility uniformly at random whose distance is below the threshold.
So essentially, the clients uniformly split their purchasing power among all facilities that are close enough.
Besides being more realistic than clients committing to a single facility, this model variant always has equilibrium facility placements.
However, it also has a downside: it considers simplistic clients that do not take waiting times into consideration.
Very recently, this downside was removed in a model proposed by~\citet{KLMS21}. In their two-stage facility location game, a given graph encodes which locations can reach other locations. Clients with a certain purchasing power occupy the nodes of this graph and facility agents strategically select a node for opening their facilities. Given a facility placement, the clients then split their purchasing power among the facilities they can reach to minimize their maximum waiting time. Analogously to classical make-span scheduling, these clients effectively try to distribute their purchasing power to load-balance the reachable facilities. 

In this paper, we set out to explore a similar setting as \citet{KLMS21} but with a drastically different client behavior. To stay in the analogy with scheduling, we consider sum-of-completion-time scheduling instead of make-span scheduling. I.e., we study facility location problems with clients that distribute their purchasing power to minimize their total waiting time.

\paragraph*{Related Work.} There is an abundance of models for strategic facility location and we refer to the surveys by \citet{ELT93} and \citet{RE05} for an overview of the classical models. To the best of our knowledge, our model has not yet been studied. 

The model by \citet{feldman-hotelling} with distance-minimizing clients on a line was generalized by~\citet{hotelling-limited-attraction} to cope with different continuous client distributions and by \citet{hotelling-line-bubble} to random distance thresholds. Moreover,
 \citet{fournier2020spatial} study clients that choose the nearest facility if they have multiple options. Also related are Voronoi games~\cite{voronoi1d,voronoigames}, which can be understood as generalized Hotelling-Downs models with distance-minimizing clients. For the version on networks by \citet{voronoigames}, the authors show that equilibria may not exist and that existence is NP-hard to decide. Also a variant on a cycle~\cite{voronoi-cycle} and in $k$-dimensional space~\cite{voronoi-kd-voters,voronoi1d,voronoi-choice} was studied.
 
For Kohlberg's model, \citet{Peters2018} prove the existence of equilibria for certain trade-offs of distance and waiting time for small even numbers of facilities and they conjecture that equilibria exist for all cases with an even number of facilities for client utility functions that are heavily tilted towards minimizing waiting times. \citet{hotelling-load-balancing} showed that computing equilibria for Kohlberg's model can be done by solving a complex system of equations and they investigated the existence of approximate equilibria. They find that under a technical assumption, $1.08$-approximate equilibria always exist.

A concept related to our model are utility systems, as introduced by \citet{vetta-utility-system}. There, agents gain utility by selecting a~set of acts, which they choose from a collection of subsets of a groundset. Utility is assigned by a function that takes the selected acts of all agents as input. 
Also covering games~\cite{gairing-covering-games} are related since they correspond to a one-sided version of our model, where clients distribute their purchasing power uniformly among all facilities in reach. In both settings, utility systems and covering games, pure NE exist and the Price of Anarchy is upper bounded by~$2$. More general versions are investigated by \citet{3g-market-sharing} and \citet{alex-market-sharing} in the form of market sharing games. In these models, $k$ agents choose to serve a subset of $n$ markets. Each market then equally distributes its utility among all agents serving it. \citet{alex-network-investment} introduced a model which considers an inherent load balancing problem, however, each facility agent can create and choose multiple facilities and each client agent chooses multiple facilities.
An empirical investigation of a two-sided facility location problem was conducted by \citet{empirical-2-stage}, in which a single facility agent opens facilities for strategic clients.
The facility agent sets service levels while the client agents determine their strategies based on these levels, but also travel distance and congestion.

Closest to our work is the above-mentioned work by~\citet{KLMS21}, where clients that minimize their maximum waiting time are considered. For this version equilibria always exist due to a potential function argument. Moreover, the authors present a polynomial time algorithm for computing the unique client equilibrium, given a facility placement. In terms of quality of the equilibria, the authors prove an essentially tight bound of 2 on the Price of Anarchy by establishing that this model is a valid utility system.
These results on quality apply to a general class of games which also includes the game we study in this paper.

Facility location problems were also studied recently with Mechanism Design, e.g.,~\citet{PT13,FFG16,ACL0W20,CFLLW21,HLST21,FP21,DEG22}.

\paragraph*{Our Contribution.}
In this paper, we introduce a two-sided facility location game with clients that minimize their total incurred waiting time.
In the first stage, each facility agent individually selects a location on a graph to open a facility, while in the second stage client agents choose which facilities to patronize.
Thereby, clients may freely distribute their purchasing power among all opened facilities within their shopping range.
In contrast to a similar game by \citet{KLMS21}, our clients minimize their total expected waiting time instead of their maximum expected waiting time.
To the best of our knowledge, we are the first to study this natural behavior in a facility location game.

The client behavior on its own is well studied in the form of \emph{atomic splittable congestion games} and we reduce our client stage to variants of these games to show existence, uniqueness and polynomial time computation of client equilibria.
This means that facilities can efficiently predict client behavior to inform their strategic location decisions.

However, on the negative side, we show that for our complete game consisting of both the facility stage and the client stage, subgame perfect equilibria (SPE) are not guaranteed to exist for all instances. This is in surprising contrast to the similar game studied by \citet{KLMS21}, which is actually a potential game with guaranteed equilibrium existence.
In fact, we prove that it is even NP-hard to decide the existence of an SPE for a given instance. This shows that the specific client behavior has a severe impact on the obtained game-theoretic properties. 

But, on the positive side, we prove the existence of 3-approximate SPEs and show that they can be computed efficiently by using another facility location game as a proxy.

\section{Model and Preliminaries}

We consider a game-theoretic model for non-cooperative facility location, called the \emph{Two-Sided Facility Location Game (\limtmpmodel{})}\footnote{We reuse notation by \citet{KLMS21}.}, where two types of agents, $k$ \emph{facilities} and $n$ \emph{clients}, strategically interact on a given vertex-weighted directed\footnote{Our results also hold for undirected graphs. Also, our model can be equivalently defined on an undirected bipartite graph.} host graph $H = (V,E,w)$, with $V = \{v_1,\dots,v_n\}$, where $w:V \to \mathbb{Q}_+$ denotes the vertex weight.
Every vertex $v_i \in V$ corresponds to a client with weight $w(v_i)$, which can be understood as her purchasing power, and at the same time, each vertex is a possible location for setting up a facility for any of the $k$ facility agents $\mathcal{F} = \{f_1,\dots,f_k\}$. Any client $v_i \in V$ considers patronizing a facility in her \emph{shopping range} $N(v_i)$, i.e., her direct closed neighborhood $N(v_i) = \{v_i\} \cup \{z \mid (v_i,z) \in E\}$. Moreover, let $w(X) = \sum_{v_i \in X}w(v_i)$, for any $X \subseteq V$, denote the total purchasing power of the client subset $X$.

In our setting, the strategic behaviors of the facility and the client agents influence each other. Facility agents select a location to attract as much client weight, i.e., purchasing power, as possible, whereas clients strategically decide how to distribute their purchasing power among the facilities in their respective shopping ranges.
More precisely, each facility agent $f_j \in \mathcal{F}$ selects a single location vertex $s_j \in V$ for setting up her facility, i.e., the strategy space of any facility agent $f_j\in \mathcal{F}$ is $V$. Let $\s = (s_1,\dots,s_k)$ denote the \emph{facility placement profile}. And let $\mathcal{S} = V^k$ denote the set of all possible facility placement profiles.
We will sometimes use the notation $\s = (s_j,s_{-j})$, where $s_{-j}$ is the vector of strategies of all facilities agents except $f_j$.
Given profile $\s$, we define the \emph{attraction range} for a facility $f_j$ on location $s_j \in V$ as $A_\s(f_j) = \{s_j\} \cup \{v_i \mid (v_i,s_j) \in E\}$. We extend this to sets of facilities $F \subseteq \mathcal{F}$ in the natural way, i.e., $A_\s(F) = \{s_j \mid f_j \in F\} \cup \{v_i \mid (v_i,s_j) \in E, f_j\in F\}$. Moreover, let $w_\s(\mathcal{F}) = \sum_{v_i \in A_\s(\mathcal{F})} w(v_i)$.

We assume that all facilities provide the same service for the same price and arbitrarily many facilities may be co-located at the same location.
Each client $v_i\in V$ strategically decides how to distribute her purchasing power $w(v_i)$ among the opened facilities in her shopping range~$N(v_i)$. For this, let $N_\s(v_i) = \{f_j \mid s_j \in N(v_i)\}$ denote the set of facilities in the shopping range of client $v_i$ under $\s$.

Let $\sigma: \mathcal{S} \times V \to \mathbb{R}_+^k$ denote the \emph{client weight distribution function}, where $\sigma(\s,v_i)$ is the weight distribution of client $v_i$ and $\sigma(\s,v_i)_{j}$ is the weight distributed by $v_i$ to facility~$f_j$. We say that function $\sigma$ is \emph{feasible} for profile $\s$, if all clients having at least one facility within their shopping range distribute all their weight to the respective facilities and all other clients distribute nothing.
Formally, function $\sigma$ is feasible for profile $\s$, if for all $v_i\in V$ we have $\sum_{f_j \in N_\s(v_i)} \sigma(\s,v_i)_j = w(v_i)$, if $N_\s(v_i) \neq \emptyset$, and $\sigma(\s,v_i)_j = 0$, for all $1\leq j \leq k$, if $N_\s(v_i) = \emptyset$.
We use the notation $\sigma = (\sigma_i,\sigma_{-i})$ where $(\sigma_i',\sigma_{-i})$ denotes the changed client weight distribution function that is identical to $\sigma$ except for client $v_i$, which plays $\sigma'(\s,v_i)$ instead of $\sigma(\s,v_i)$.

Any state $(\s,\sigma)$ of the \limtmpmodel{} is determined by a facility placement profile~$\s$ and a feasible client weight distribution function $\sigma$.
A state $(\s,\sigma)$ then yields a \emph{facility load} $\ell_j(\s,\sigma)$, with $\ell_j(\s,\sigma) = \sum_{i=1}^n \sigma(\s,v_i)_j$ for facility agent $f_j$. Hence, $\ell_j(\s,\sigma)$ naturally models the total congestion for the service offered by the facility of agent $f_j$ induced by~$\sigma$. A facility agent $f_j$ strategically selects a location $s_j$ to maximize her induced facility load $\ell_j(\s,\sigma)$. We assume that the service quality of facilities, e.g., the waiting time, deteriorates with increasing congestion. Hence for a client, the facility load corresponds to the waiting time at the respective facility.

There are many ways how clients could distribute their purchasing power.
We investigate the \emph{\brokenmodel{}} with
\emph{waiting time minimizing clients}, i.e., a natural strategic behavior where client $v_i$ strategically selects $\sigma(\s,v_i)$ to minimize her total waiting time.
More precisely, the cost of client $i$ is 
$$L_i(\s,\sigma) = \sum_{j=1}^k \sigma(\s,v_i)_j \ell_j(\s,\sigma)\text.$$
Another version of the \limtmpmodel{} we use in this paper is the \emph{\feldmanmodel{}}, in which clients distribute their weight equally among all facilities in their range.
Formally, for each pair of client $v_i$ and facility $f_j$, with $f_j \in N_\s(v_i)$, the client's weight is $\sigma(\s, v_i)_j = \frac{w(v_i)}{N_\s(v_i)}$.
Another such model is the load balancing \limtmpmodel{} introduced by \citet{KLMS21}, which we do not consider in this paper.

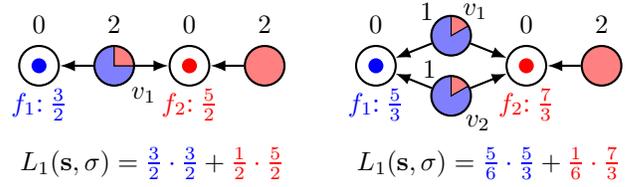
\begin{figure}
    \centering
    \begin{tikzpicture}
        \clip (-.4, -1.56) rectangle (3.4, 0.86);
        \coordinate(splitnode) at (1, 0);
        \coordinate(wholenode) at (3, 0);
        \filldraw[fill=blue!50] ($(splitnode)+(0,7.5pt)$) arc (90:360:7.5pt) -- (splitnode) -- cycle;
        \filldraw[fill=red!50] ($(splitnode)+(0,7.5pt)$) arc (90:0:7.5pt) -- (splitnode) -- cycle;
        \filldraw[fill=red!50, draw=none] ($(wholenode)+(0,7.5pt)$) arc (90:450:7.5pt);
        \begin{scope}[every node/.style = {circle, thick, draw, inner sep = 0pt, minimum size = 15pt},every path/.style = {thick, {-Latex[length=2mm]}}]
            \node (0) [label=above:$0$, label={[blue, label distance=-0.2cm]below:$f_1$: $\frac32$}] at (0, 0) {};
            \node (1) [label=above:$2$] [label=below right:$v_1$] at (splitnode) {};
            \node (2) [label=above:$0$, label={[red, label distance=-0.2cm]below:$f_2$: $\frac52$}] at (2, 0) {};
            \node (3) [label=above:$2$] at (wholenode) {};
            \draw (1) edge (0) edge (2) (3) edge (2);
        \end{scope}
        \tikzstyle{fac}=[circle, inner sep = 2pt, fill]
        \node (f1)[fac, blue] at (0) {};
        \node (f2)[fac, red] at (2) {};
        \node at (1.5, -1.3) {$L_1(\s,\sigma)= {\color{blue}\frac32\cdot\frac32}+{\color{red}\frac12\cdot\frac52}$};
    \end{tikzpicture}
    \hspace{5mm}
    \begin{tikzpicture}
        \clip (-.4, -1.56) rectangle (3.4, 0.86);
        \coordinate(splitnodea) at (1, 0.4);
        \coordinate(splitnodeb) at (1, -0.4);
        \coordinate(wholenode) at (3, 0);
        \filldraw[fill=blue!50] ($(splitnodea)+(0,7.5pt)$) arc (90:390:7.5pt) -- (splitnodea) -- cycle;
        \filldraw[fill=red!50] ($(splitnodea)+(0,7.5pt)$) arc (90:30:7.5pt) -- (splitnodea) -- cycle;
        \filldraw[fill=blue!50] ($(splitnodeb)+(0,7.5pt)$) arc (90:390:7.5pt) -- (splitnodeb) -- cycle;
        \filldraw[fill=red!50] ($(splitnodeb)+(0,7.5pt)$) arc (90:30:7.5pt) -- (splitnodeb) -- cycle;
        \filldraw[fill=red!50, draw=none] ($(wholenode)+(0,7.5pt)$) arc (90:450:7.5pt);
        \tikzset{every path/.style = {thick, {-Latex[length=2mm]}}}
        \begin{scope}[every node/.style = {circle, thick, draw, inner sep = 0 pt, minimum size = 15 pt}]
            \node (0) [label=above:$0$, label={[blue, label distance=-0.2cm]below:$f_1$: $\frac53$}] at (0, 0) {};
            \node (1a) [label={[label distance=-0.1cm]above left:$1$}] [label={[label distance=-0.1cm]above right:$v_1$}] at (splitnodea) {};
            \node (1b) [label={[label distance=-0.1cm]120:$1$}] [label={[label distance=-0.05cm]below right:$v_2$}] at (splitnodeb) {};
            \node (2) [label=above:$0$, label={[red, label distance=-0.2cm]below:$f_2$: $\frac73$}] at (2, 0) {};
            \node (3) [label=above:$2$] at (wholenode) {};
            \draw (1a) edge (0) edge (2) (1b) edge (0) edge (2) (3) edge (2);
        \end{scope}
        \tikzstyle{fac}=[circle, inner sep = 2pt, fill]
        \node (f1)[fac, blue] at (0) {};
        \node (f2)[fac, red] at (2) {};
        \node at (1.5, -1.3) {$L_1(\s,\sigma)= {\color{blue}\frac56\cdot\frac53}+{\color{red}\frac16\cdot\frac73}$};
    \end{tikzpicture}
    \caption{
        Two instances of the \brokenmodel{} with their client equilibria visualized for a given facility placement profile.
        The clients on each node split their weight (above the nodes) among the facilities in their shopping ranges to minimize their cost.
        The facilities are marked by dots inside the nodes and receive the loads below in the respective unique client equilibria.
        The respective cost $L_1(\s,\sigma)= \sigma(\s,v_1)_1 \ell_1(\s,\sigma) + \sigma(\s,v_1)_2 \ell_2(\s,\sigma)$ of client $v_1$ (and also $v_2$ on the right) is given below each instance.
    }
    \label{fig:example}
\end{figure}

We say that $\sigma^*$ is a \emph{client equilibrium weight distribution}, or simply a \emph{client equilibrium}, if for all $v_i \in V$ and for all placement profiles $\s$ we have that $L_i(\s,(\sigma_{i}^*,\sigma_{-i})) \leq L_i(\s,(\sigma_{i}',\sigma_{-i}))$ for all feasible client weight distribution functions $\sigma'$. See \Cref{fig:example} for an illustration of the client behavior in the \brokenmodel{}.

We define the \emph{stable states} of the \limtmpmodel{} as \emph{subgame perfect equilibria (SPE)}, since we inherently have a two-stage game. First, the facility agents select locations for their facilities and then, given this facility placement, the clients strategically distribute their purchasing power among the facilities in their shopping range.
A state $(\s,\sigma)$ is in SPE, or \emph{stable}, if
\begin{enumerate}[label=(\arabic*), align=left]
    \item $\forall f_j \in \mathcal{F},  \forall s_j' \in V$: $\ell_j(\s,\sigma) \geq \ell_j((s_j',s_{-j}),\sigma)$ and
    \item $\forall \s' \in \mathcal{S}, \forall v_i \in V$: $L_i(\s',\sigma) \leq L_i(\s',(\sigma_{i}',\sigma_{-i}))$ for all feasible client weight distribution functions $\sigma'$.
\end{enumerate}
As we will show, SPE do not always exist. Hence,  we relax the first condition as follows  and obtain the
notion of $\alpha$-\emph{approximate subgame perfect equilibria ($\alpha$-SPE)}:
\begin{enumerate}[label=(\arabic*'), align=left]
    \item $\forall f_j \in \mathcal{F},  \forall s_j' \in V$: $\ell_j(\s,\sigma) \geq \alpha \ell_j((s_j',s_{-j}),\sigma)$ 
   
\end{enumerate}



We study dynamic properties of the \limtmpmodel{}.
Let an \emph{improving move} by some (facility or client) agent be a strategy change that improves the agent's utility.
A game has the \emph{finite improvement property (FIP)} if all sequences of improving moves are finite.
The FIP is equivalent to the existence of an \emph{ordinal potential function}~\cite{MS96}, which implies equilibrium existence.

\section{Client Equilibria}

The existence of client equilibria is implied by Kakutani's fixed-point theorem~\cite{kakutani}.\ifdefined\arxiv\footnote{See \Cref{sec:existence} for the proof.}\fi\ 
A reduction of our game to \emph{atomic splittable routing games} by \citet{bhaskar2015uniqueness} obtains uniqueness of client equilibria.\ifdefined\arxiv\footnote{See \Cref{sec:uniqueness} for the reduction to \emph{atomic splittable routing games}.}\fi\ Note that for the reduction to their model to work, we require the extension defined by the authors where each player may have her own individual delay function for each edge.
\begin{restatable}{theorem}{uniqueness}
\label{theo:uniqueness}
For a given facility placement profile $\s$ the client equilibrium in the \brokenmodel{} is unique.
\end{restatable}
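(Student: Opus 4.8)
The plan is to establish uniqueness by leveraging the reduction to \emph{atomic splittable routing games} with player-specific edge delay functions, as announced in the text immediately preceding the statement. First I would make the reduction explicit: fix the facility placement profile $\s$, and construct a routing instance in which each client $v_i$ with $N_\s(v_i) \neq \emptyset$ is a player who must route her total flow $w(v_i)$ from a common source to a common sink. The parallel edges between source and sink correspond to the opened facilities $f_j$, but only those facilities in $v_i$'s shopping range $N_\s(v_i)$ are available to player $i$; I model unavailable facilities by giving player $i$ a prohibitively steep (or effectively infinite) delay on those edges, or equivalently by restricting her strategy space to the edges for $f_j \in N_\s(v_i)$. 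The flow player $i$ places on the edge for $f_j$ is exactly $\sigma(\s,v_i)_j$, so the facility load $\ell_j(\s,\sigma) = \sum_i \sigma(\s,v_i)_j$ is the total flow on that edge.

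The key step is to match the cost functions. In the \brokenmodel{}, client $i$'s cost is $L_i(\s,\sigma) = \sum_{j} \sigma(\s,v_i)_j\, \ell_j(\s,\sigma)$, so the marginal delay experienced by player $i$ on the edge for $f_j$ is linear in the total load $\ell_j$ with player-specific structure. I would verify that choosing the per-edge delay function to be $d_j(x) = x$ (so that the delay on edge $j$ equals its total load $\ell_j$) reproduces exactly the client cost above, since player $i$'s contribution to her own cost along edge $j$ is $\sigma(\s,v_i)_j \cdot d_j(\ell_j) = \sigma(\s,v_i)_j\,\ell_j$. Here is precisely where the player-specific extension of \citet{bhaskar2015uniqueness} is needed, as the excerpt flags: the availability restrictions encoded by the shopping ranges mean different players see different admissible edges, which their individual-delay-function framework accommodates. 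With this correspondence in place, a client equilibrium weight distribution $\sigma^*$ corresponds exactly to a Nash flow of the routing game, since the equilibrium condition $L_i(\s,(\sigma_i^*,\sigma_{-i})) \le L_i(\s,(\sigma_i',\sigma_{-i}))$ is the standard Wardrop/Nash best-response condition in the routing instance.

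Finally I would invoke the uniqueness theorem of \citet{bhaskar2015uniqueness}, which guarantees that the atomic splittable routing game with (player-specific) delay functions has a unique Nash flow, provided the delay functions satisfy the required regularity hypotheses — typically that each $d_j$ is continuous, nondecreasing, and convex, and that the cost terms $x\,d_j(x)$ are strictly convex. Since the linear delays $d_j(x) = x$ satisfy these conditions, uniqueness of the Nash flow transfers directly to uniqueness of the client equilibrium $\sigma^*$ for the given $\s$, completing the proof. The main obstacle I anticipate is handling the availability restrictions cleanly: I must ensure that modeling the out-of-range facilities (either by infinite delays or by strategy-space restriction) does not violate the technical hypotheses of the cited uniqueness result, and that the clients with $N_\s(v_i) = \emptyset$, who distribute no weight, are consistently treated as players with empty (or trivial) strategy spaces so that they do not interfere with the uniqueness argument.
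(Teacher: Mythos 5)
Your approach is the same as the paper's: reduce the client stage, for fixed $\s$, to an atomic splittable routing game of \citet{bhaskar2015uniqueness} on a parallel-link network (source--facility--sink), with linear delays and the player-specific delay extension, and then invoke their uniqueness theorem. However, the step you yourself flag as the ``main obstacle'' is a genuine gap, and neither of the two remedies you propose works as stated. Modeling out-of-range facilities by ``effectively infinite'' delays leaves the class of admissible delay functions required by the cited theorem (nonnegative, nondecreasing, \emph{differentiable}, convex -- their class $\mathscr{L}$), so the theorem no longer applies. Restricting a player's strategy space to a subset of edges is also not available in the cited framework: their model lets every player use all $s$--$t$ paths, and the only player-specific object is the delay function, so an ``equivalent'' per-player edge restriction would require re-proving their uniqueness result for a modified game rather than citing it.

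The paper closes exactly this gap with a finite, affine penalty: for a client $v_i$ and a facility $f_j \notin N_\s(v_i)$, the edge $(s,f_j)$ gets the player-specific delay $l^i_{(s,f_j)}(x) = x + 3w_\s(\mathcal{F})$, which stays inside $\mathscr{L}$. It then proves an exchange argument: any client placing weight $\epsilon > 0$ on a forbidden edge strictly decreases her cost by shifting that weight to an edge of some in-range facility, because the penalty $3w_\s(\mathcal{F})$ exceeds any load-induced cost achievable in the game. Hence no equilibrium of the routing game uses a forbidden edge, the two games are isomorphic (equilibria correspond exactly), and uniqueness transfers. You should also be careful when invoking the theorem of \citet{bhaskar2015uniqueness}: its hypotheses are not only regularity of the delay functions but also a topological condition on the network (it must be a generalized nearly parallel graph); uniqueness fails in general graphs. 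Your parallel-link construction does satisfy this condition, but it must be stated and checked rather than subsumed under ``regularity hypotheses.''
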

Client equilibria can be computed in polynomial time with an algorithm given by \citet{harks_equilibrium_2021}.\ifdefined\arxiv\footnote{This requires a reduction to \emph{atomic splittable singleton congestion games} given in \Cref{sec:computation}.}\fi
\begin{restatable}{theorem}{computation}
\label{theo:polytime-client}
For a given facility placement profile $\s$ the client equilibrium in the \brokenmodel{} can be computed in polynomial time.
\end{restatable}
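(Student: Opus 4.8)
The plan is to realize the client stage of the \brokenmodel{}, for a fixed placement profile $\s$, as an \emph{atomic splittable singleton congestion game} and then invoke the polynomial-time equilibrium algorithm of \citet{harks_equilibrium_2021}. First I would set up the reduction. The players are the clients $v_i$ with non-empty shopping range $N_\s(v_i)$, each endowed with demand $w(v_i)$; clients with $N_\s(v_i) = \emptyset$ carry demand $0$ and distribute nothing, so they can be dropped. The resources are the $k$ facilities $f_1,\dots,f_k$, kept as distinct resources even when co-located. The strategy space of client $v_i$ consists of all splittings of her demand $w(v_i)$ over the \emph{singleton} resource set $N_\s(v_i)$, which is exactly the set of feasible weight distributions $\sigma(\s,v_i)$. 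Finally, each facility $f_j$ is assigned the linear cost function $c_j(x) = x$.

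Next I would verify that the payoffs coincide. With total resource load $\ell_j(\s,\sigma) = \sum_{i} \sigma(\s,v_i)_j$, the cost that client $v_i$ incurs in the congestion game is $\sum_{j} \sigma(\s,v_i)_j\, c_j(\ell_j(\s,\sigma)) = \sum_{j} \sigma(\s,v_i)_j\, \ell_j(\s,\sigma)$, which is precisely $L_i(\s,\sigma)$. Since feasible weight distributions of $v_i$ correspond exactly to feasible flows of demand $w(v_i)$ over $N_\s(v_i)$, the set of client equilibria of the \brokenmodel{} and the set of (pure) Nash equilibria of the constructed congestion game coincide. The cost functions are linear (hence convex, nondecreasing, and identical across players), so the instance falls within the restricted class of singleton games for which the cited algorithm is designed.

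I would then apply the algorithm of \citet{harks_equilibrium_2021} for atomic splittable singleton congestion games, which computes an equilibrium flow in time polynomial in the encoding length of the instance; by \Cref{theo:uniqueness} this equilibrium is unique, so its output is \emph{the} client equilibrium for $\s$. Translating the resulting flow back into a weight distribution function $\sigma$ is immediate, as the correspondence between flows and weight distributions is the identity.

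I expect the main technical care to lie not in the construction, which is a direct one-to-one correspondence, but in confirming that our instance genuinely satisfies the structural hypotheses required by the cited algorithm. In particular, I would check that treating co-located facilities as separate singleton resources and using the linear cost functions $c_j(x) = x$ respect the singleton-strategy assumption and the monotonicity/convexity conditions underlying their running-time guarantee, and that the polynomial bound is expressed in the size of our encoding ($n$, $k$, and the bit-lengths of the weights $w(v_i)$) rather than in any quantity that could blow up under the reduction.
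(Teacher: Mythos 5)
Your proposal is correct and matches the paper's own proof: the paper likewise reduces the client stage to an atomic splittable singleton congestion game (clients as players with demand $w(v_i)$, facilities as resources, allowable sets $E_i = N_\s(v_i)$, player-specific affine costs instantiated as $a_{i,e}=1$, $b_{i,e}=0$) and then invokes Theorem~8 of \citet{harks_equilibrium_2021} for the polynomial-time computation. The only cosmetic difference is that the paper keeps clients with empty shopping range as zero-demand players rather than dropping them, which changes nothing.
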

Notably, the most involved step to compute a client equilibrium is to determine the pairs of clients and facilities with non-zero weight.
This information then yields a set of linear equations solvable by Gaussian elimination.

\ifdefined\arxiv\else
We refer to \citet{arxiv} for details of the application of Kakutani's fixed-point theorem and the reductions for \Cref{theo:uniqueness,theo:polytime-client}.
\fi

We now show how the loads of two facilities with a shared client relate to each other in a client equilibrium.

\begin{lemma}
\label{lemma:shared-client}
Let $f_p$ and $f_q$ be two facilities that share a client $v_i$ in a given facility placement profile $\s$.
In a client equilibrium $\sigma$, if $\sigma(\s, v_i)_p > 0$, then
\[
    \ell_p(\s, \sigma) + \sigma(\s, v_i)_p \leq \ell_q(\s, \sigma) + \sigma(\s, v_i)_q\text.
\]
\end{lemma}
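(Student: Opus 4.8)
The plan is to exploit that, in a client equilibrium, the weight distribution $\sigma(\s, v_i)$ of client $v_i$ minimizes her cost $L_i(\s, (\,\cdot\,, \sigma_{-i}))$ over all feasible distributions of her own weight, while the distributions $\sigma_{-i}$ of all other clients are held fixed. Since $f_p$ and $f_q$ share the client $v_i$, we have $f_p, f_q \in N_\s(v_i)$, so $v_i$ is free to move weight between these two facilities. The key observation is that such a shift affects only the loads of the two facilities involved, and each changes by exactly the amount shifted, because it is $v_i$'s own weight that moves.

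Concretely, I would argue by contradiction: suppose $\ell_p(\s,\sigma) + \sigma(\s,v_i)_p > \ell_q(\s,\sigma) + \sigma(\s,v_i)_q$. Because $\sigma(\s,v_i)_p > 0$, for a sufficiently small $\varepsilon > 0$ client $v_i$ can feasibly deviate to $\sigma_i'$ by moving $\varepsilon$ units of weight from $f_p$ to $f_q$, i.e.\ setting $\sigma'(\s,v_i)_p = \sigma(\s,v_i)_p - \varepsilon$ and $\sigma'(\s,v_i)_q = \sigma(\s,v_i)_q + \varepsilon$, leaving all her other entries (and hence her total weight) unchanged. Under this deviation only $\ell_p$ and $\ell_q$ change, by $-\varepsilon$ and $+\varepsilon$ respectively, so writing $x_p = \sigma(\s,v_i)_p$ and $x_q = \sigma(\s,v_i)_q$ a direct expansion of $L_i = \sum_{j} \sigma(\s,v_i)_j\, \ell_j(\s,\sigma)$ gives
\[
    L_i(\s,(\sigma_i',\sigma_{-i})) - L_i(\s,\sigma) = \varepsilon\bigl((\ell_q + x_q) - (\ell_p + x_p)\bigr) + 2\varepsilon^2,
\]
since only the $p$- and $q$-terms of the sum are affected and their changed products contribute exactly the linear and quadratic terms above.

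By assumption the coefficient of the linear term is strictly negative, so for all sufficiently small $\varepsilon$ the right-hand side is negative; this contradicts the equilibrium condition that $\sigma_i$ minimizes $L_i$. Hence the claimed inequality must hold.

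I expect the only points requiring care to be (i) verifying that the deviation is feasible---which holds precisely because $\sigma(\s,v_i)_p > 0$ lets us decrease the $p$-entry while both facilities lie in $v_i$'s shopping range---and (ii) checking that the quadratic term $2\varepsilon^2$ cannot overturn the negative linear term, which is immediate for small enough $\varepsilon$. Equivalently, one can phrase the whole argument via the first-order (KKT) optimality conditions of the convex quadratic program that $v_i$ solves: the marginal cost of placing weight on $f_j$ is $\ell_j(\s,\sigma) + \sigma(\s,v_i)_j$, and at an optimum every facility with positive weight attains the minimum marginal cost, which yields the inequality directly.
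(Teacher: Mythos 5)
Your proof is correct and follows essentially the same route as the paper's: both consider shifting a small weight $\varepsilon$ from $f_p$ to $f_q$, expand the resulting change in $v_i$'s quadratic cost to get the linear term $\varepsilon\bigl((\ell_q + \sigma(\s,v_i)_q) - (\ell_p + \sigma(\s,v_i)_p)\bigr)$ plus $2\varepsilon^2$, and let $\varepsilon \to 0$. The only cosmetic difference is that you argue by contradiction while the paper argues directly from the equilibrium inequality (invoking uniqueness for strictness, which your framing does not even need).
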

\begin{proof}
If $v_i$ transfers weight from $f_p$ to $f_q$ it only affects the terms $\sigma(\s, v_i)_p \ell_p(\s, \sigma)$ and $\sigma(\s, v_i)_q \ell_q(\s, \sigma)$ in her cost and not the other terms of the sum.
Let $L_{i,j} = \sigma(\s, v_i)_j \ell_j(\s, \sigma)$, for all $v_i \in V$ and $f_j \in \mathcal{F}$.
Since $\sigma$ is the unique client equilibrium, we know that for a transfer of weight of $\epsilon$ with $0 < \epsilon \leq \sigma(\s, v_i)_p$ we have
$L_{i,p} + L_{i,q}  < (\sigma(\s, v_i)_p - \epsilon) (\ell_p(\s, \sigma)-\epsilon) + (\sigma(\s, v_i)_q + \epsilon)(\ell_q(\s, \sigma) + \epsilon)$.
This yields $\ell_p(\s, \sigma) + \sigma(\s, v_i)_p < 2\epsilon + \ell_q(\s, \sigma) + \sigma(\s, v_i)_q$.
Since $\epsilon$ may be arbitrarily small, but not zero, this finishes the proof.
\end{proof}

\Cref{lemma:shared-client} also implies equality of the two terms, i.e., $\ell_p(\s, \sigma) + \sigma(\s, v_i)_p = \ell_q(\s, \sigma) + \sigma(\s, v_i)_q$, if a client $v_i$ has non-zero weight on both $f_p$ and $f_q$.

\section{Subgame Perfect Equilibria}

Unlike other versions of the \limtmpmodel{}, the \brokenmodel{} does not admit subgame perfect equilibria in all instances.
In fact, the counterexample only needs two facility agents with the host graph $H$ being a path of size 4.
Therefore, it may be complicated to find non-trivial subclasses of the \brokenmodel{} which always admit SPE.

\begin{theorem}
\label{theo:no-spe}
There are instances of the \brokenmodel{} for which a subgame perfect equilibrium does not exist.
\end{theorem}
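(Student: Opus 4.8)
The plan is to reduce the existence of a subgame perfect equilibrium to the existence of a pure Nash equilibrium in a purely facility-level game, and then to exhibit a small instance of this facility game that has no such equilibrium. By \Cref{theo:uniqueness}, for every placement profile $\s$ the client equilibrium is unique; write $\sigma^*(\s)$ for this contingent client response and $\hat\ell_j(\s) := \ell_j(\s,\sigma^*(\s))$ for the load facility $f_j$ receives once clients re-equilibrate. A state $(\s,\sigma)$ satisfies condition~(2) of the SPE definition exactly when $\sigma = \sigma^*$ in every subgame, and given this, condition~(1) reduces to the requirement that no facility $f_j$ can relocate to some $s_j'$ and strictly increase $\hat\ell_j$. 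Hence an SPE exists if and only if the facility game with payoffs $\hat\ell_j$ admits a pure Nash equilibrium, and it suffices to construct an instance in which every placement profile admits an improving relocation for some facility.

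For the construction I would take $H$ to be the path $v_1 - v_2 - v_3 - v_4$ with two facilities and carefully chosen vertex weights $w_1,\dots,w_4$. On this graph a facility placed on an inner vertex ($v_2$ or $v_3$) lies in the attraction range of three clients, while a facility on an end vertex reaches only two, so the placements naturally partition into a handful of types. Up to reflecting the path, the distinct placement profiles are the co-located ones and the pairs $\{v_1,v_2\},\{v_1,v_3\},\{v_1,v_4\},\{v_2,v_3\},\{v_2,v_4\},\{v_3,v_4\}$, a constant-size list. For each profile I would compute $\hat\ell_1,\hat\ell_2$ explicitly using \Cref{lemma:shared-client}: every client that spreads its weight across two reachable facilities forces the equality $\ell_p + \sigma(\s,v_i)_p = \ell_q + \sigma(\s,v_i)_q$, and together with feasibility (each client's contributions sum to its purchasing power) these give a small linear system, solvable exactly as in the computation underlying \Cref{theo:polytime-client}. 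With the loads in hand for all profiles, the final step is to display, for each profile, a relocation of one facility that strictly raises its load, and to arrange these improving moves so they close into a cycle over the profile set, certifying the absence of a fixed point and hence of an SPE.

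The main obstacle is choosing the weights so that this best-response cycle really exists, because the instance cannot be symmetric. Indeed, for reflection-symmetric weights ($w_1 = w_4$, $w_2 = w_3$) the profile $\{v_2,v_3\}$ is already stable: both inner facilities then carry load $(w_1 + w_2 + w_3 + w_4)/2$ in a symmetric even split, and a routine check of the three relocations of one facility --- co-locating at the other inner vertex, or moving to either end --- strictly lowers that facility's load for all positive weights. So the weights must be asymmetric, tuned so that at each candidate profile the tension between reaching more clients on a central vertex and avoiding a costly shared load with the competing facility tips in favor of deviating. I expect the delicate cases to be the degenerate co-located profiles, where a facility can ``escape'' to capture a client exclusively, and the near-balanced spread profiles such as $\{v_2,v_3\}$; finding a single weight vector that destabilizes all of them simultaneously is where the real work lies, and it is then confirmed by the finite case check described above.
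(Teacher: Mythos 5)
Your framing is correct: by \Cref{theo:uniqueness} the clients' response to any placement is unique, so an SPE exists if and only if the induced one-shot facility game with payoffs $\hat\ell_j(\s) = \ell_j(\s,\sigma^*(\s))$ has a pure Nash equilibrium, and it suffices to exhibit one instance where every profile admits an improving relocation. This is exactly the shape of the paper's argument, and you even guess the right instance class --- a path on four vertices with two facilities and asymmetric weights (the paper uses the directed path $v_1 \to v_2 \to v_3 \to v_4$ with weights $3, 2, 7, 1$). Your observation that reflection-symmetric weights cannot work, because the profile $\{v_2,v_3\}$ would then be stable, is also sound and explains why the paper's weights are lopsided.

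However, there is a genuine gap: you never produce the weights, and you say explicitly that finding a vector that destabilizes every profile simultaneously ``is where the real work lies.'' For a theorem asserting the existence of a counterexample, the counterexample (together with its verification) \emph{is} the proof; a reduction to ``find weights such that the finite case check goes through'' does not establish that such weights exist. The verification is also not a formality --- it requires computing the unique client equilibrium (via the equality version of \Cref{lemma:shared-client}) for each of the roughly ten profile types and confirming an improving move at each one, which is precisely what the paper does in \Cref{fig:no-spe}: the four non-trivially-excluded profiles $(v_2,v_3)$, $(v_2,v_4)$, $(v_3,v_4)$, $(v_3,v_3)$ form a best-response cycle with payoff transitions $(5,7)\to(5,8)\to(5.25,4.75)\to(7,5)$, and all remaining profiles are dominated. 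A secondary point: your case taxonomy (``inner vertex reaches three clients, end vertex reaches two'') presumes the undirected path, whereas in the paper's directed instance each of $v_2, v_3, v_4$ covers exactly two clients and $v_1$ only one; the undirected variant can likely be made to work (the paper's footnote claims the results transfer), but your enumeration of profile types, and hence the weight-tuning problem you would need to solve, differs from the paper's and is left entirely open.
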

\begin{proof}
Let $G^*=(V^*,E^*)$ with $V^*=\{v_1, v_2, v_3, v_4\}$ and $E^*=\{(v_1, v_2), (v_2, v_3), (v_3, v_4)\}$.
Let the client weights be $w(v_1)=3$, $w(v_2)=2$, $w(v_3)=7$, and $w(v_4)=1$.
For two facility agents, this instance does not admit an SPE.
\begin{figure}
    \centering
    \newcommand{\graphsetup}{
        \clip (-.55, -.55) rectangle (7.5, .55);
        \tikzset{every path/.style = {thick, {-Latex[length=2mm]}}}
        \begin{scope}[every node/.style = {circle, thick, draw, inner sep = 0pt, minimum size = 15pt}]
            \node (0) [label=above left:$3$] at (0, 0) {};
            \node (1) [label=above left:$2$] at (1, 0) {};
            \node (2) [label=above left:$7$] at (2, 0) {};
            \node (3) [label=above left:$1$] at (3, 0) {};
            \draw (0) edge (1) (1) edge (2) (2) edge (3);
        \end{scope}
        \tikzstyle{fac}=[circle, inner sep = 2pt, fill]
        \tikzstyle{improve}=[bend right, dashed]
        \tikzstyle{explain}=[anchor=west, align=left]
    }
    \newcommand{\explain}[4]{
        \node [anchor=east] at (5.3, 0) {${\color{blue}#1}, {\color{red}#2}$};
        \node at (5.5, 0) {$\to$};
        \node [anchor=west] at (5.7, 0) {${\color{blue}#3}, {\color{red}#4}$};
    }
    \begin{tikzpicture}
        \graphsetup{}
        \node (f1)[fac, blue] at (1) {};
        \node (f2)[fac, red] at (2) {};
        \draw (f2) edge[improve, red] (3);
        \explain{5}{7}{5}{8}
    \end{tikzpicture}
    \begin{tikzpicture}
        \graphsetup{}
        \node (f1)[fac, blue] at (1) {};
        \node (f2)[fac, red] at (3) {};
        \draw (f1) edge[improve, blue] (2);
        \explain{5}{8}{5.25}{4.75}
    \end{tikzpicture}
    \begin{tikzpicture}
        \graphsetup{}
        \node (f1)[fac, blue] at (2) {};
        \node (f2)[fac, red] at (3) {};
        \draw (f2) edge[improve, bend left, red] (1);
        \explain{5.25}{4.75}{7}{5}
    \end{tikzpicture}
    \begin{tikzpicture}
        \graphsetup{}
        \node (f1)[fac, blue, yshift=1.2mm] at (2) {};
        \node (f2)[fac, red, yshift=-1.2mm] at (2) {};
        \draw (f2) edge[improve, red] (3);
        \explain{4.5}{4.5}{5.25}{4.75}
    \end{tikzpicture}
    \caption{
        Best responses (dashed) to game states for the instance $G^*$ of the \brokenmodel{} without SPE.
        The utilities of the facilities before and after the move are given on the right.}
    \label{fig:no-spe}
\end{figure}
In \Cref{fig:no-spe} we show the best responses for states that cannot trivially be excluded as SPE.
\end{proof}
 Additionally,  determining whether an instance admits an SPE is computationally  intractable.

\begin{theorem}
\label{theo:NP}
Deciding if an instance of the \brokenmodel{} admits an SPE is NP-hard.
\end{theorem}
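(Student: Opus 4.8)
The plan is to reduce from a suitable NP-complete satisfiability problem, say \textsc{3-Sat} (or \textsc{1-in-3-Sat}, whichever yields the cleanest clause gadgets), constructing for each formula $\phi$ an instance $I_\phi$ of the \brokenmodel{} that admits an SPE if and only if $\phi$ is satisfiable. Because the client stage always has a unique and efficiently computable equilibrium (\Cref{theo:uniqueness,theo:polytime-client}), an SPE exists precisely when the induced \emph{facility game} — in which each facility's payoff is its load under the (uniquely determined) client equilibrium — has a pure Nash equilibrium. Hence it suffices to engineer the host graph and the vertex weights so that this facility game has a pure equilibrium exactly when $\phi$ is satisfiable.

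The construction will use three kinds of gadgets. \emph{Variable gadgets} encode a truth assignment: for each variable I would provide a component in which one facility has exactly two symmetric best-response locations, one tagged \emph{true} and one tagged \emph{false}, so that in any equilibrium this facility commits to one literal while having no local incentive to deviate. \emph{Clause gadgets} play the role of the destabilizing path $G^*$ from the proof of \Cref{theo:no-spe}: in isolation each clause gadget has no stable facility placement and forces precisely the cyclic sequence of improving moves exhibited there, so such a gadget can never be part of an SPE. The crucial coupling is that each clause gadget shares clients with the variable facilities of its three literals, and the additional purchasing power contributed by a \emph{satisfying} literal shifts the relevant facility loads just enough to break the improving-move cycle and admit a stable placement, mirroring how adding weight stabilizes the instance in \Cref{theo:no-spe}. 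I would fix the total number of facilities so that none is spare, forcing every facility into its intended gadget role.

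For correctness, the forward direction takes a satisfying assignment, places the variable facilities at the corresponding literal locations, and stabilizes every clause gadget; using \Cref{lemma:shared-client} together with the explicit load computations of the stabilized gadgets, I would verify that no facility has an improving move, so the resulting profile is an SPE. For the reverse direction I would argue the contrapositive: if $\phi$ is unsatisfiable, then under \emph{any} facility placement at least one clause has all three of its literals unselected, so that clause gadget retains its $G^*$-style improving-move cycle and some facility always has a strictly improving move, which precludes an SPE.

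I expect the main obstacle to be \emph{decoupling} the gadgets. Since the client equilibrium is global and couples all facilities that share clients, loads produced inside one gadget can leak into neighboring gadgets and spoil the intended local analysis. The hard part will be choosing the vertex weights, the edge directions, and the neighborhood structure so that, regardless of the placement chosen in the rest of the instance, the load each facility sees inside its own gadget stays within a controlled range; in particular I must guarantee that a variable facility never gains by abandoning its two literal locations, and that a clause gadget's (in)stability depends only on whether at least one of its literals is selected and not on the precise global configuration. Establishing these invariants — essentially a robustness strengthening of the single-instance argument behind \Cref{theo:no-spe} — is where the bulk of the technical work will lie.
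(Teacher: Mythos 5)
Your proposal correctly identifies the right high-level framing (by \Cref{theo:uniqueness,theo:polytime-client} the client stage is uniquely determined, so SPE existence reduces to pure Nash existence in the induced facility game) and the right destabilizing device (copies of the $G^*$ gadget from \Cref{theo:no-spe}). However, it has a genuine gap: the entire technical content of the reduction --- the variable gadgets, the clause gadgets, the weights, and the ``robustness invariants'' ensuring that a clause gadget is stable if and only if one of its literals is selected --- is deferred rather than constructed. The obstacle you yourself flag is the real one: because the client equilibrium is global, any gadget that shares clients with another gadget has its loads perturbed by the rest of the placement, and you give no mechanism for bounding this leakage. A second, related gap is the claim that fixing the total number of facilities ``forces every facility into its intended gadget role.'' Facilities may locate anywhere on the host graph; nothing in your plan shows that in an arbitrary putative equilibrium the placement decodes into a truth assignment, which is exactly the direction of the argument that needs the most care.

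It is instructive that the paper's proof avoids both problems by design. It reduces from \textsc{IndependentSet} on graphs of maximum degree $3$ rather than from a satisfiability problem, and --- crucially --- the $k$ copies of $G^*$ are \emph{disjoint components}, sharing no clients with the part of the host graph encoding $G$. Stabilization is therefore not achieved by coupling loads into the unstable gadget, but by an occupancy-counting argument: a copy of $G^*$ with exactly one facility is stable and lucrative (payoff $9$ at the second vertex from the right), while a copy with two facilities has no equilibrium by \Cref{theo:no-spe}. The weights ($1.75$ on the subdivision and padding vertices, $0$ on original vertices) are chosen so that an independent set position pays $5.25$, any non-independent position pays at most $4.375$, and joining a copy of $G^*$ occupied by at most one other facility guarantees at least $4.5$; a pigeonhole argument then shows that when no independent set of size $k$ exists, some copy of $G^*$ must end up with two facilities, killing every candidate equilibrium. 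If you want to salvage your plan, the cleanest fix is to adopt this decoupling: make your ``clause'' gadgets separate components whose stability is governed purely by how many facilities they attract, and let the assignment part of the construction control that count through payoff thresholds, rather than through shared-client load transfer.
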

\begin{proof}
We reduce from {\sc IndependentSet (IS)}: Given a graph $G=(V,E)$ and an integer $k\le |V|$, decide whether there exists a subset $I\subseteq V$ with $|I| = k$ such that no two vertices in $I$ share an edge.  This problem is NP-hard, even for graphs with maximum degree $3$~\cite{GareyJohnson}.  In  the following,  we assume that $G$ has a maximum degree of at most $3$.

To prove the theorem, we construct an instance of the \brokenmodel{} on a host graph $H=(V',E',w)$ with $k'=2k$ facilities such that there is an SPE in $H$ if and only if $G$ contains an independent set of size $k$.
We obtain $H$ from $G$ by replacing every edge $e=\{u,v\}$ by a new vertex $x_e$ and two new edges $(x_e,u)$ and $(x_e,v)$.
For every vertex $v$ of degree $1$ (or $2$)  add two (or one) vertices $y_v$ (and $z_v$) and the edge $(y_v,v)$ (and $(z_v,v)$).
The newly added $x$-, $y$- and $z$-vertices have weight $1.75$,  vertices that originally belonged to $V$ have weight $0$.
Hence,  we now have in $H$ exactly $|V|$ many vertices that all have weight $0$ with exactly $3$ neighbors that have weight $1.75$ each.
All other vertices have weight $1.75$ and one or two neighbors with weight $0$.
To complete the construction of $H$,  we add $k$ copies of the graph $G^*$ that we used in \Cref{theo:no-spe} (see \Cref{fig:no-spe}).

Now, if $G$ contains an independent set $I$ of size $k$,  then there is an equilibrium in which $k$ facilities are placed on the vertices of $I$ and one facility each is placed on the second vertex from the right in each of the $k$ copies of $G^*$.
The first $k$ facilities each have a payoff of $5.25$ and,  hence, play their best response.  In particular, it is not an improvement to choose any vertex in any of the copies of $G^*$ as this yields at most $4.75$.
Each of the $k$ players in the copies of $G^*$ has a payoff of $9$ and, hence,  is clearly playing the best response.

If there is no independent set of size $k$ in $G$, then there cannot be more than $k-1$ players with a payoff of more than $4.375$ on vertices outside of the copies of $G^*$.
Note that a higher payoff is only possible on nodes of degree $3$ with no other facility within distance $2$.
However,  in an equilibrium no facility would choose a location with a payoff of at most $4.375$ as there is at least one of the copies of $G^*$ with at most one other facility on it.
Switching to the best vertex in that copy of $G^*$ guarantees a payoff of at least $4.5$.
Finally,  we observe that there is no equilibrium with two players on host graph $G^*$ in \Cref{fig:no-spe},  hence the is no equilibrium in $H$.
\end{proof}

\section{Approximation of SPE}

In this section, we show that an SPE in the \feldmanmodel{} is a 3-approximate SPE in the \brokenmodel{}.
We first give an example instance, where in an SPE in the \feldmanmodel{} a facility can improve by a factor of 2 when treating it as a state of the \brokenmodel{}.
This instance serves as a lower bound of the approximation quality using the \feldmanmodel{} and is also shown in \Cref{fig:bad-approx}.

\begin{example}
\label{ex:bad-approx}
Let $t > 0$ be a natural number.
Let $G=(V,E)$, with $V=\{v_b, v_a\} \cup \bigcup_{i=1}^t{V_i}$, with $V_i = \{x_i\} \cup \{y_{i,1}, \dots, y_{i,t}\}$ and $E = \bigcup_{i=1}^t{E_i}$, with $E_i = \{(v_a, x_i)\} \cup \{(x_i, y_{i,1}), \dots, (x_i, y_{i, t})\}$.
Let the client weights be $w(v_b)= 1$, $w(v_a)= 0$, $w(x_i) = 1$, for all $i$ and $w(y_{i,j}) = \frac{2t-2}{t}$, for all $i, j$.
Let the number of facilities be $k=t^2+1$.
\end{example}
For this example in a client equilibrium of the \feldmanmodel{}, facilities are located at $v_b$ and all of the nodes $y_{i,j}$, with $1 \leq i, j \leq t$, while for the \brokenmodel{} the facility agent on $v_b$ can improve by a factor of $2$ by moving to location $v_a$.

\begin{theorem}
Let $\s$ be an SPE for an instance of the \feldmanmodel{}. When transferred to the \brokenmodel{}, it is possible for profile $\s_\feldmanshort{}$ to be a $2$-approximate SPE.
\end{theorem}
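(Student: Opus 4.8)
The plan is to prove the claim constructively on the instance of Example~\ref{ex:bad-approx}: I would exhibit a concrete \feldmanmodel{} SPE whose transfer to the \brokenmodel{} admits a single-facility deviation with improvement ratio tending to $2$ as $t\to\infty$. Take the placement $\s$ that puts one facility on $v_b$ and one on each of the $t^2$ vertices $y_{i,j}$, using exactly $n=t^2+1$ facilities. In the \feldmanmodel{} the client $v_b$ is isolated, so its facility has load $w(v_b)=1$; the client $y_{i,j}$ reaches only the facility on $y_{i,j}$, while $x_i$ reaches exactly $y_{i,1},\dots,y_{i,t}$ and splits uniformly, sending $\tfrac1t$ to each. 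Hence every $y_{i,j}$-facility carries load $\tfrac{2t-2}{t}+\tfrac1t=\tfrac{2t-1}{t}\ge 1$. To show $\s$ is a genuine \feldmanmodel{} SPE I would verify that no facility gains by relocating: co-locating at $v_b$ or at some $y_{i',j'}$ only gives the deviator a reduced share of those clients, while moving to an empty $x_{i'}$ or to $v_a$ makes the reachable $x$-clients split over $t+1$ facilities and leaves the deviator with load below~$1$; all these values stay below the incumbent loads.

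Next I would transfer $\s$ to the \brokenmodel{}. The only client with a nontrivial choice is each $x_i$, and it faces $t$ symmetric facilities, so uniqueness (\Cref{theo:uniqueness}) together with symmetry forces the same uniform split; the client equilibrium and all loads therefore coincide with the \feldmanmodel{}, and in particular the $v_b$-facility still has load~$1$. The crux is to compute the \brokenmodel{} load after the $v_b$-facility deviates to $v_a$. After the move each $x_i$ may split its unit weight between the now-open facility on $v_a$ and its $t$ facilities $y_{i,j}$; writing $a$ for the weight each $x_i$ sends to $v_a$ and $b$ for the weight it sends to each $y_{i,j}$, symmetry over $i,j$ gives load $ta$ on $v_a$ and $\tfrac{2t-2}{t}+b$ on each $y_{i,j}$, subject to $a+tb=1$. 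At the interior equilibrium the equality form of \Cref{lemma:shared-client} gives $ta+a=\tfrac{2t-2}{t}+2b$; substituting $b=\tfrac{1-a}{t}$ yields $a=\tfrac{2t}{t^2+t+2}$, so the deviating facility attains load $ta=\tfrac{2t^2}{t^2+t+2}$. I would also check $0<a<1$, hence $b>0$, so the equilibrium is genuinely interior and the lemma applies.

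Comparing the two loads, the deviation raises the $v_b$-facility from $1$ to $\tfrac{2t^2}{t^2+t+2}$, an improvement ratio strictly below $2$ for every finite $t$ but converging to $2$ as $t\to\infty$. Thus the \feldmanmodel{} SPE, read in the \brokenmodel{}, can be no better than a $2$-approximate SPE, which establishes the claimed lower bound on the approximation quality obtainable through the \feldmanmodel{} proxy. The main obstacle is the equilibrium recomputation after the deviation, and the conceptual point behind it is that in the \brokenmodel{} the $x_i$-clients shift disproportionately much weight onto the lightly loaded $v_a$ (rather than the uniform split of the \feldmanmodel{}), which is precisely what drives the deviator's load toward~$2$.
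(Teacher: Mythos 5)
Your proposal is correct and follows the paper's proof essentially step for step: the same instance (\Cref{ex:bad-approx}), the same placement with one facility on $v_b$ and one on each $y_{i,j}$, the same verification that this is a \feldmanmodel{} SPE, and the same deviation of the $v_b$-facility to $v_a$ yielding load $\frac{2t^2}{t^2+t+2}\to 2$. The only cosmetic difference is that you pin down the post-deviation client equilibrium via the equality form of \Cref{lemma:shared-client} (plus symmetry and uniqueness) instead of the paper's explicit derivative computation, but these are the same first-order condition, so the two arguments coincide.
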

\begin{proof}
For the \feldmanmodel{}, the facility placement profile $\s = (v_b, y_{1,1}, \dots y_{1,t}, \dots, y_{t,1}, \dots y_{t,t})$ is an SPE for \Cref{ex:bad-approx}, since the facility on $v_b$ receives $\frac{2t-2}{2t}+\frac{1}{t+1} < 1$ by switching to any $y_{i,j}$, receives $\frac{1}{t+1} < 1$ by switching to any $x_i$, and receives $\frac{1}{t+1}$ by switching to $v_a$.
Clearly, the other facilities lose more utility by switching strategies.

We transfer $\s$ to the \brokenmodel{}, after which the facility $f_j$ on $v_b$ still receives a utility of $1$.
We consider a switch by $f_j$ to $v_a$ resulting in the facility placement profile $\s'$.
Since there is only one client equilibrium, by symmetry all clients $x_1, \dots, x_t$ have the same cost and all facilities except $f_j$ have the same utility, so we fix some arbitrary client $v_i \in {x_1, \dots, x_t}$ and some facility $f_p \neq f_j$ with $f_p \in N_\s(x_i)$.

The cost of $v_i$ is $L_i(\s', \sigma)
= \sigma(\s', v_i)_j\ell_j(\s', \sigma) + \sigma(\s', v_i)_p\ell_p(\s', \sigma)t
= \sigma(\s', v_i)_j(\sigma(\s', v_i)_j +z)+\frac{1 - \sigma(\s', v_i)_j}{t}\left(\frac{1 - \sigma(\s', v_i)_j}{t}+\frac{2t-2}{t}\right)t$, where $z$ is the sum of the weight that $f_j$ receives from all clients except $v_i$.
Because the cost is minimal, we know that for the derivative $\frac{d}{d\sigma(\s', v_i)_j}L_i(\s', \sigma) = 0$ if and only if the minimum of $L_i(\s', \sigma)$ has $\sigma(\s', v_i)_j \in [0, w(v_i)]$.
Thus, we get $2\sigma(\s', v_i)_j + z-2+2\frac{\sigma(\s', v_i)_j}{t} = 0$.
We substitute $z = (t-1) \sigma(\s', v_i)_j$ because of symmetry\footnote{Note that we cannot do this substitution earlier, because doing it before applying the derivative would minimize the sum of utilities of all clients in $\{v_1,\dots, v_t\}$, instead of just $v_i$.} and then get $\sigma(\s', v_i)_j =\frac{2t}{t^2+t+2}$, which is within $[0, w(v_i)]$.
Thus, $L_j(\s', \sigma) = \frac{2t^2}{t^2+t+2}$, with $\lim_{t\to \infty} L_j(\s', \sigma) = 2$.
\end{proof}

\begin{figure}
\centering
\begin{tikzpicture}
    \clip (-4, -2.7) rectangle (4, 0.1);
    \node at (0, -1.6) {\dots};
    \node at (-2, -2) {\dots};
    \node at (2, -2) {\dots};
    \tikzset{every path/.style = {thick, {Latex[length=2mm]}-}}
    \tikzset{every node/.style = {circle, thick, draw, inner sep = 0pt, minimum size = 15pt}, text width=15pt, align=center}
    
    \node (before) [label={[label distance=-0.1cm]above left:$1$}] at (-3.5, -0.4) {$v_b$};
    \node (c) [label={[label distance=-0.1cm]above left:$0$}] at (0, -0.4) {$v_a$};
    \node (1a0) [label=above left:$1$] at (-2, -1.2) {$x_1$};
    \node (1b1) [label=below right:$\frac{2t-2}{t}$] at (-1, -2) {$y_{1,t}$};
    \node (1c1) [label=below right:$\frac{2t-2}{t}$] at (-3, -2) {$y_{1,1}$};
    \node (2a0) [label=above right:$1$] at (2, -1.2) {$x_t$};
    \node (2b1) [label=below right:$\frac{2t-2}{t}$] at (1, -2) {$y_{t,1}$};
    \node (2c1) [label=below right:$\frac{2t-2}{t}$] at (3, -2) {$y_{t,t}$};
    \draw (c) edge (1a0) edge (2a0)
        (1b1) edge (1a0) (1c1) edge (1a0)
        (2b1) edge (2a0) (2c1) edge (2a0);
\end{tikzpicture}
\caption{An instance of the \limtmpmodel{} for which an SPE in the \feldmanmodel{} is a $2$-approximate SPE in the \brokenmodel{}.
The improving facility improves by moving from $v_b$ to $v_a$.
}
\label{fig:bad-approx}
\end{figure}
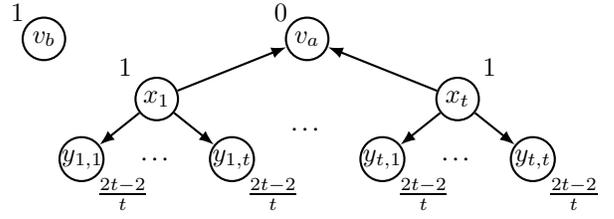

Next, we prove an upper bound on the approximation quality:
First, we show that of all facilities in range of a client $v_i$, in a client equilibrium for the \brokenmodel{}, the one with the lowest total load receives at least as much weight from $v_i$ as any other facility.
\begin{lemma}
\label{lemma:min-receives-more}
In the \brokenmodel{}, given a facility placement profile $\s$ and a client equilibrium $\sigma$, let $f_j$ be the facility in the attraction range $N_\s(v_i)$ of client $v_i$ with the lowest facility load. Then $\sigma(\s, v_i)_j \geq \sigma(\s, v_i)_x$ for any other facility $f_x \in N_\s(v_i)$.

Similarly, for $f_p \in N_\s(v_i)$ with the highest facility load: $\sigma(\s, v_i)_p \leq \sigma(\s, v_i)_x$ for any other facility $f_x$.
\end{lemma}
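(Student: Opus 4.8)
The plan is to derive both inequalities directly from \Cref{lemma:shared-client}, which relates the loads and client weights of any two facilities sharing a client. Throughout, I would fix the client $v_i$ and abbreviate, for each facility $f_k \in N_\s(v_i)$, its load by $\ell_k = \ell_k(\s,\sigma)$ and the weight it receives from $v_i$ by $w_k = \sigma(\s,v_i)_k$. The goal is then to show that the minimum-load facility $f_j$ satisfies $w_j \ge w_x$ for every $f_x \in N_\s(v_i)$, and dually that the maximum-load facility $f_p$ satisfies $w_p \le w_x$.

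For the first claim, I would fix an arbitrary $f_x \in N_\s(v_i)$ and split into two cases according to $w_x$. If $w_x = 0$, then $w_j \ge w_x$ holds trivially since weights are nonnegative. If $w_x > 0$, then $v_i$ shares the facilities $f_x$ and $f_j$ with strictly positive weight on $f_x$, so \Cref{lemma:shared-client} applied with source facility $f_x$ and target $f_j$ yields $\ell_x + w_x \le \ell_j + w_j$. Rearranging gives $w_x - w_j \le \ell_j - \ell_x$, and since $f_j$ has the lowest load we have $\ell_j - \ell_x \le 0$, whence $w_x \le w_j$, as desired.

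The second claim I would prove symmetrically. If $w_p = 0$ then $w_p \le w_x$ is immediate. If $w_p > 0$, applying \Cref{lemma:shared-client} with source $f_p$ and target $f_x$ gives $\ell_p + w_p \le \ell_x + w_x$, so $w_p - w_x \le \ell_x - \ell_p \le 0$ because $f_p$ has the highest load; hence $w_p \le w_x$.

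The argument is short, and the only point requiring care — the one place the approach could go wrong — is that \Cref{lemma:shared-client} presupposes strictly positive weight on the source facility. This is exactly why the zero-weight cases must be dispatched separately before invoking the lemma. Beyond this bookkeeping I do not anticipate any genuine obstacle.
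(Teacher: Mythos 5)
Your proof is correct and follows essentially the same route as the paper's: both apply \Cref{lemma:shared-client} with the positively-weighted facility as the source and then use the sign of the load difference ($\ell_j \le \ell_x$, respectively $\ell_p \ge \ell_x$) to cancel and conclude. The only difference is that you explicitly dispatch the zero-weight case, which the paper handles implicitly by restricting attention to facilities with non-zero weight.
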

\begin{proof}
Let $f_x \in N_\s(v_i)$ be an arbitrary facility with non-zero weight $\sigma(\s, v_i)_x > 0$ on $v_i$.
Using \Cref{lemma:shared-client}, we get
\begin{align*}
\ell_j(\s, \sigma) + \sigma(\s, v_i)_j &\geq \ell_x(\s, \sigma) + \sigma(\s, v_i)_x\\
\ell_j(\s, \sigma) + \sigma(\s, v_i)_j &\geq \ell_j(\s, \sigma) + \sigma(\s, v_i)_x\\
\sigma(\s, v_i)_j &\geq \sigma(\s, v_i)_x \text.
\end{align*}
Thus, $f_j$ receives at least as much weight from $v_i$ as any other facility in the attraction range of $v_i$.
The proof for $f_p \in N_\s(v_i)$ with the highest facility load works analogously.
\end{proof}

With this, we prove that if we move a profile $\s$ from the \brokenmodel{} to the \feldmanmodel{} then the facility with the lowest load in the client equilibrium of the \brokenmodel{} has an equal or lower load in the \feldmanmodel{}.

\begin{lemma}
\label{lemma:min-broken}
In the \brokenmodel{}, given a facility placement profile $\s$ and a client equilibrium $\sigma$, let $f_j \in \mathcal{F}$ be the facility with the lowest load. Then $\ell_j(\s, \sigma) \geq \ell_j(\s, \sigma_\feldmanshort)$, where $\sigma_\feldmanshort$ is the client equilibrium for $\s$ in the \feldmanmodel{}.

Similarly, for the facility $f_p \in \mathcal{F}$ with the highest load, $\ell_p(\s, \sigma) \leq \ell_p(\s, \sigma_\feldmanshort{})$.
\end{lemma}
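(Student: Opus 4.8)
The plan is to reduce the global load comparison to a per-client weight comparison and then sum over all clients that can reach the facility in question. Fix the placement profile $\s$, let $\sigma$ be the client equilibrium in the \brokenmodel{} and let $\sigma_\feldmanshort$ be the (uniform-split) client equilibrium in the \feldmanmodel{}. For any facility $f_t$ we have $\ell_t(\s,\sigma) = \sum_{v_i \in A_\s(f_t)} \sigma(\s,v_i)_t$ and $\ell_t(\s,\sigma_\feldmanshort) = \sum_{v_i \in A_\s(f_t)} \frac{w(v_i)}{|N_\s(v_i)|}$, since both distributions assign zero weight to $f_t$ from clients outside $A_\s(f_t)$, and the set $A_\s(f_t)$ depends only on $\s$, not on the distribution. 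Hence it suffices to compare $\sigma(\s,v_i)_t$ with $\frac{w(v_i)}{|N_\s(v_i)|}$ client by client and add up the resulting inequalities.

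For the minimum-load facility $f_j$, the key observation is that a globally minimum-load facility is in particular the minimum-load facility inside the range $N_\s(v_i)$ of every client $v_i$ that can reach it. Thus \Cref{lemma:min-receives-more} applies for each such $v_i$ and gives $\sigma(\s,v_i)_j \geq \sigma(\s,v_i)_x$ for every $f_x \in N_\s(v_i)$. Since the weights that $v_i$ places on the facilities of $N_\s(v_i)$ sum to $w(v_i)$ by feasibility, the share of $f_j$ is at least the maximum, hence at least the average, i.e., $\sigma(\s,v_i)_j \geq \frac{w(v_i)}{|N_\s(v_i)|} = \sigma_\feldmanshort(\s,v_i)_j$. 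Summing this inequality over all $v_i \in A_\s(f_j)$ yields $\ell_j(\s,\sigma) \geq \ell_j(\s,\sigma_\feldmanshort)$.

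The statement for the maximum-load facility $f_p$ follows symmetrically from the second part of \Cref{lemma:min-receives-more}: as a globally maximum-load facility, $f_p$ is the maximum-load facility within every range that contains it, so $\sigma(\s,v_i)_p \leq \sigma(\s,v_i)_x$ for all $f_x \in N_\s(v_i)$, which makes its share at most the average $\frac{w(v_i)}{|N_\s(v_i)|}$; summing gives $\ell_p(\s,\sigma) \leq \ell_p(\s,\sigma_\feldmanshort)$. The only point I would treat with care — and which I view as the conceptual crux — is the passage from the global extremum to a per-range extremum, since this is exactly what licenses the invocation of the per-client \Cref{lemma:min-receives-more} (had $f_j$ merely been a local minimum, the domination need not hold). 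The subsequent maximum-versus-average (respectively minimum-versus-average) step and the summation over $A_\s(f_j)$ are then routine.
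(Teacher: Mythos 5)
Your proof is correct and follows essentially the same route as the paper's: apply \Cref{lemma:min-receives-more} to each client $v_i$ with $f_j \in N_\s(v_i)$ (noting that a global load extremum is also the extremum within each client's range), deduce $\sigma(\s,v_i)_j \geq \frac{w(v_i)}{|N_\s(v_i)|}$ by comparing the maximal share to the average under feasibility, and sum over the clients in $A_\s(f_j)$; the maximum-load case is symmetric. Your write-up merely makes explicit two steps the paper leaves implicit (the decomposition of loads as per-client sums and the maximum-versus-average inequality), so there is nothing substantively different.
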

\begin{proof}
For an arbitrary client $v_i$, let $f_j$ be the facility with the lowest load in $N_\s(v_i)$. Thus by \Cref{lemma:min-receives-more}, $f_j$ receives as least as much weight from $v_i$ as any other facility in the range of $v_i$.
As only clients in $N_\s(v_i)$ receive weight from $v_i$, $f_j$ receives weight of at least $\sigma(\s, v_i)_j \geq \frac{w(v_i)}{\left|N_\s(v_i)\right|}$ from each $v_i$.
Thus, for each $v_i$, it holds that $\sigma(\s, v_i)_j \geq \sigma_\feldmanshort(\s, v_i)_j$ and therefore, $\ell_j(\s, \sigma) \geq \ell_j(\s, \sigma_\feldmanshort{})$.
The proof for $f_p \in \mathcal{F}$ with the highest facility load works analogously.
\end{proof}

We need another lemma to show that removing a facility from an instance of the \brokenmodel{} does not result in a utility loss for any other facility.

\begin{lemma}
\label{lemma:no-loss-removal}
Let $\s$ be a facility placement profile and $\sigma$ be a client equilibrium in the \brokenmodel{}.
If we remove a facility agent $f_j$ from the placement (and instance) resulting in $\s'$ with client equilibrium $\sigma'$, no facility $f_x \neq f_j$ loses utility.
\end{lemma}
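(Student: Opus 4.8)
The plan is to argue by contradiction, using the marginal-cost characterization of client equilibria provided by \Cref{lemma:shared-client}. Recall that \Cref{lemma:shared-client} together with its equality remark says: in a client equilibrium $\sigma$ for a profile $\s$, whenever a client $v_i$ places positive weight on a facility $f_a \in N_\s(v_i)$, we have $\ell_a(\s,\sigma) + \sigma(\s,v_i)_a \leq \ell_b(\s,\sigma) + \sigma(\s,v_i)_b$ for every other $f_b \in N_\s(v_i)$. I would apply this characterization both to the original state $(\s,\sigma)$ and to the state $(\s',\sigma')$ obtained after deleting $f_j$. Let $L = \{\, f_x \neq f_j : \ell_x(\s',\sigma') < \ell_x(\s,\sigma) \,\}$ be the set of surviving facilities that strictly lose load; the goal is to show $L = \emptyset$.

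The first key step is a local claim: if $f_x \in L$ and some client $v_i$ both decreases her weight on $f_x$ (so $\sigma'(\s',v_i)_x < \sigma(\s,v_i)_x$) and increases her weight on some facility $f_y$ (so $\sigma'(\s',v_i)_y > \sigma(\s,v_i)_y$), then $f_y \in L$ as well. To prove this I would invoke the two inequalities the equilibria guarantee: from $(\s',\sigma')$, using that $f_y$ carries positive weight, $\ell_y(\s',\sigma') + \sigma'(\s',v_i)_y \leq \ell_x(\s',\sigma') + \sigma'(\s',v_i)_x$; and from $(\s,\sigma)$, using that $f_x$ carries positive weight, $\ell_x(\s,\sigma) + \sigma(\s,v_i)_x \leq \ell_y(\s,\sigma) + \sigma(\s,v_i)_y$. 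Adding these and regrouping, the weight terms contribute a strictly negative amount (since $v_i$ strictly decreased her weight on $f_x$ and strictly increased it on $f_y$), while $f_x \in L$ contributes $\ell_x(\s,\sigma)-\ell_x(\s',\sigma')>0$; what remains forces $\ell_y(\s',\sigma') < \ell_y(\s,\sigma)$, i.e. $f_y \in L$.

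The second step converts this into a global contradiction via weight conservation. Deleting $f_j$ leaves every client with at least as much weight to distribute among the remaining facilities: a client whose only in-range facility was $f_j$ now distributes nothing (and contributes nothing to any $f_x \neq f_j$ in either state), whereas every other client still distributes all of $w(v_i)$ and hence gains exactly $\sigma(\s,v_i)_j \geq 0$ of weight to spread over the surviving facilities. Now suppose $L \neq \emptyset$ and consider, for each client, her total weight placed on $L$. If some client strictly decreased this total, she decreased her weight on some $f_x \in L$, and, since her net weight over all surviving facilities did not drop, she must have increased her weight on some facility $f_y \notin L$ — contradicting the local claim. Hence every client weakly increases her total weight on $L$, so summing over all clients the aggregate load $\sum_{f_x \in L}\ell_x(\s',\sigma')$ does not fall below $\sum_{f_x \in L}\ell_x(\s,\sigma)$, contradicting the definition of $L$ in which every facility strictly loses load. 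Therefore $L = \emptyset$ and no surviving facility loses utility.

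I expect the main obstacle to be the careful bookkeeping in the second step: correctly handling clients whose shopping range becomes empty, verifying that the positivity hypotheses needed to invoke \Cref{lemma:shared-client} hold in the respective games (that $f_x$ and $f_y$ are genuinely in range and carry the required positive weight), and confirming that a strict decrease of a client's $L$-weight forces an increase on a surviving non-$L$ facility rather than only on the deleted $f_j$ (which is impossible). The sign analysis underlying the local claim is then routine once the two equilibrium inequalities are set up.
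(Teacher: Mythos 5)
Your proposal is correct and takes essentially the same route as the paper: both assume a nonempty set of losing facilities, use the fact that each client's total weight on surviving facilities cannot decrease to exhibit a client who shifts weight from a losing facility to a non-losing one, and then derive a contradiction from the four inequalities obtained by applying \Cref{lemma:shared-client} in both the old and the new client equilibrium. Your ``local claim plus summation'' packaging is just a contrapositive reorganization of the paper's direct choice of a winning/losing facility pair, and your positivity and in-range checks match what the paper implicitly relies on.
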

\begin{proof}
Let $F_L$ be the set of facilities that lose utility by the removal of $f_j$.
We assume towards contradiction that this set is non-empty.
Since the sets of clients $A_\s(F_L)$ and $A_{\s'}(F_L)$ in the attraction range of $F_L$ are equal for both facility placement profiles, there must be some client $v_i \in A_\s(F_L)$ which allocates more weight outside $F_L$ in $\sigma'$ than in $\sigma$.
Thus, there exists a winning facility $f_w \notin F_L$, with $\ell_w(\s', \sigma') \geq \ell_w(\s, \sigma)$ and $\sigma'(\s', v_i)_w > \sigma(\s, v_i)_w$, and a losing facility $f_l \in F_L$, with $\ell_l(\s', \sigma') < \ell_l(\s, \sigma)$ and $\sigma'(\s', v_i)_l < \sigma(\s, v_i)_l$.
Thus, we get the two statements
\begin{align*}
\ell_w(\s', \sigma') + \sigma'(\s', v_i)_w &> \ell_w(\s, \sigma) + \sigma(\s, v_i)_w \text{ and}\\
\ell_l(\s', \sigma') + \sigma'(\s', v_i)_l &< \ell_l(\s, \sigma) + \sigma(\s, v_i)_l \text.
\end{align*}
By \Cref{lemma:shared-client}, we also get
\begin{align*}
\ell_w(\s, \sigma) + \sigma(\s, v_i)_w &\geq \ell_l(\s, \sigma) + \sigma(\s, v_i)_l \text{ and}\\
\ell_w(\s', \sigma') + \sigma'(\s', v_i)_w &\leq \ell_l(\s', \sigma') + \sigma'(\s', v_i)_l \text.
\end{align*}
Therefore, we arrive at a contradiction.
\end{proof}

We use the preceding lemmas to prove a 3-approximation:

\begin{theorem}
A $(1+\epsilon)$-approximate SPE in the \feldmanmodel{} is a $(3+2\epsilon)$-approximate SPE in the \brokenmodel{}.
\end{theorem}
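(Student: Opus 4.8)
The plan is to verify the approximate best-response condition (1') one deviating facility at a time. Fix a facility $f_j$ and a deviation target $s_j' \in V$, write $\s' = (s_j', s_{-j})$, and let $\sigma, \sigma'$ be the \brokenmodel{} client equilibria for $\s, \s'$ and $\sigma_\feldmanshort, \sigma_\feldmanshort'$ the \feldmanmodel{} client equilibria for $\s, \s'$. Since $\s$ is a $(1+\epsilon)$-approximate SPE in the \feldmanmodel{}, the only hypothesis I may feed in is the single inequality $\ell_j(\s', \sigma_\feldmanshort') \le (1+\epsilon)\,\ell_j(\s, \sigma_\feldmanshort)$. The goal is to convert this \feldmanshort{}-statement into $\ell_j(\s', \sigma') \le (3+2\epsilon)\,\ell_j(\s, \sigma)$, so the entire task reduces to translating between \brokenshort{}-loads and \feldmanshort{}-loads of the same facility in the same placement.

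I would aim for the additive split $\ell_j(\s',\sigma') \le 2\,\ell_j(\s',\sigma_\feldmanshort') + \ell_j(\s,\sigma)$, since combined with the \feldmanshort{}-SPE inequality and $\ell_j(\s,\sigma_\feldmanshort)\le\ell_j(\s,\sigma)$ it gives $2(1+\epsilon)\ell_j(\s,\sigma)+\ell_j(\s,\sigma)=(3+2\epsilon)\ell_j(\s,\sigma)$ exactly, matching the target constant (and leaving the expected slack against the factor-$2$ lower bound of \Cref{ex:bad-approx}, where the deviator satisfies $\ell_j(\s',\sigma')=2$ while the split evaluates to $3$). The first inequality — that at $s_j'$ the facility draws at most twice its uniform share plus its old \brokenshort{}-load — should follow from the balancing in \Cref{lemma:shared-client}: for every client $v_i$ putting positive weight on $f_j$ in $\sigma'$, the equality $\ell_j + \sigma'(\s',v_i)_j = \ell_q + \sigma'(\s',v_i)_q$ pins the \emph{excess} weight $f_j$ collects over its uniform share to the load gap between $f_j$ and its competitors, and \Cref{lemma:min-receives-more} controls how lopsided that excess can be. The factor $2$, rather than something degree-dependent, should be forced by the fact that in a \feldmanshort{}-SPE no co-competing facility is starving, so the loads around $s_j'$ cannot be arbitrarily imbalanced.

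For the current-placement comparison I would invoke \Cref{lemma:min-broken}: the facility with the lowest \brokenshort{}-load has \brokenshort{}-load at least its \feldmanshort{}-load, which is exactly $\ell_j(\s,\sigma_\feldmanshort)\le\ell_j(\s,\sigma)$ when $f_j$ is that facility — and, reassuringly, in \Cref{ex:bad-approx} the improving facility sitting on $v_b$ is precisely the lowest-load one, with $\ell_j(\s,\sigma_\feldmanshort)=\ell_j(\s,\sigma)=1$. Because condition (1') must hold for every facility, I would then argue it suffices to bound the facility that stands to gain the most: any facility whose current \brokenshort{}-load already dominates its \feldmanshort{}-load closes the chain directly, while the remaining facilities reduce to the lowest-load one where \Cref{lemma:min-broken} applies. \Cref{lemma:no-loss-removal} is the tool that legitimizes passing between $\s$ and the placement with $f_j$ deleted monotonically, so that the ``before'' and ``after'' loads of $f_j$ can be compared against the untouched facilities $s_{-j}$.

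The main obstacle is precisely this \brokenshort{}-to-\feldmanshort{} translation at the current placement. Unlike \Cref{lemma:min-broken}, which only pins down the two extreme facilities, a generic facility can have \feldmanshort{}-load far exceeding its \brokenshort{}-load: an overloaded facility sharing many clients with lightly loaded neighbors sheds almost all of them in the \brokenmodel{} while retaining its full uniform share in the \feldmanmodel{}, so naively $\ell_j(\s,\sigma_\feldmanshort)\le\ell_j(\s,\sigma)$ can fail badly. The delicate point is to rule out such a ``shedding'' configuration for a facility that simultaneously possesses a strongly improving \brokenshort{}-deviation, and the only available leverage is the \feldmanshort{}-SPE condition applied to the \emph{other} facilities — a shedding facility's lightly loaded neighbors would themselves prefer the attractive deviation target $s_j'$, contradicting stability. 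Making this interaction quantitative, so that it produces exactly the coefficients $2$ and $1$ above rather than a degree-dependent constant, is the step I expect to require the most care.
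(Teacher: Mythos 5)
Your chain rests on two translation steps that are never established, and the one carrying all the weight is false as stated. The comparison at the current placement, $\ell_j(\s,\sigma_\feldmanshort)\le\ell_j(\s,\sigma_\brokenshort)$, is exactly what \Cref{lemma:min-broken} gives \emph{only} for the facility with the overall lowest \brokenshort{} load; for a general deviator it can fail badly, as you concede, and your proposed repair (invoking the \feldmanshort{}-SPE condition of the \emph{other} facilities) is precisely the missing content, not a detail. Moreover, no repair can deliver the coefficient $1$ you need: the strongest placement-wide comparison the SPE hypothesis yields is a ratio bound --- in a $(1+\epsilon)$-SPE of the \feldmanmodel{} every facility could co-locate with any other and collect at least half its load, so all \feldmanshort{} loads lie within a factor $2+2\epsilon$ of each other, and \Cref{lemma:min-broken} sandwiches all \brokenshort{} loads into the same interval --- which gives only $\ell_j(\s,\sigma_\feldmanshort)\le(2+2\epsilon)\ell_j(\s,\sigma_\brokenshort)$. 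Substituting that into your split yields $2(1+\epsilon)(2+2\epsilon)+1$, far above $3+2\epsilon$. Your other step, $\ell_j(\s',\sigma')\le 2\,\ell_j(\s',\sigma_\feldmanshort')+\ell_j(\s,\sigma)$, is likewise unproved, and its multiplicative core is false outside the SPE context: when the competitors at the target are heavily overloaded, shared clients send $f_j$ their \emph{entire} weight in the \brokenmodel{}, so the ratio of \brokenshort{} to \feldmanshort{} load can grow like the number of facilities sharing a client. \Cref{lemma:min-receives-more} caps a facility's received weight at the uniform share only when that facility has the \emph{highest} load in the client's range, which your deviator need not have; and your fallback, that ``the remaining facilities reduce to the lowest-load one,'' names no mechanism for such a reduction.

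The paper closes both holes with a global argument that your facility-by-facility plan forgoes (you explicitly restrict yourself to the deviator's own SPE inequality at the outset). It first proves the ratio bound above, which crucially uses deviations by facilities other than the deviator. It then makes a case distinction on the deviator $f_p$ \emph{after} the move: if $f_p$ attains the maximum \brokenshort{} load in $\s'$, then \Cref{lemma:min-receives-more} does apply and gives $\ell_p(\s',\sigma_\brokenshort)\le\ell_p(\s',\sigma_\feldmanshort)\le(2+2\epsilon)\,\ell_{l,\feldmanshort}(\s,\sigma_\feldmanshort)\le(2+2\epsilon)\,\ell_p(\s,\sigma_\brokenshort)$; otherwise $f_p$'s new load is dominated by that of the maximum-load facility $f_x$, and the removal/reinsertion decomposition (\Cref{lemma:no-loss-removal} and its converse) gives $\ell_p(\s',\sigma_\brokenshort)\le\ell_x(\s',\sigma_\brokenshort)\le\ell_x(\s,\sigma_\brokenshort)+\ell_p(\s,\sigma_\brokenshort)\le(2+2\epsilon)\ell_p(\s,\sigma_\brokenshort)+\ell_p(\s,\sigma_\brokenshort)$. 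Note where the additive term really comes from: the client weight freed by deleting $f_p$, not a \brokenshort{}-to-\feldmanshort{} translation at the old placement. Without the ratio bound and this case distinction your chain does not close, so the proposal has a genuine gap rather than being an alternative proof.
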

\begin{proof}
In an SPE $(\s, \sigma_\feldmanshort{})$ in the \feldmanmodel{}, the facility $f_{l,\feldmanshort{}}$ with the lowest facility load and the facility $f_{h,\feldmanshort{}}$ with the highest facility load are separated by at most a factor of $2+2\epsilon$, as otherwise $f_{l,\feldmanshort{}}$ could improve by more than a factor of $1+\epsilon$ by deviating to the location of $f_{h,\feldmanshort{}}$ and thereby receive at least half her load.
When transferring $\s$ to the \brokenmodel{} with the corresponding client equilibrium $\sigma$, the factor between $f_{l,\brokenshort{}}$ with the lowest facility load and $f_{h,\brokenshort{}}$ and with the highest facility load is also at most $2+2\epsilon$, by \Cref{lemma:min-broken}.

Let an arbitrary facility $f_p$ make an improving move regarding the \brokenmodel{}, changing the facility placement profile from $\s$ to $\s'$.
Assume that $f_p$ has the highest utility $\ell_p(\s', \sigma_\brokenshort{})$ of all facilities in $(\s', \sigma_\brokenshort{})$ or
\begin{align}
    \label{eq:assumption-most}
    \ell_p(\s', \sigma_\brokenshort{}) = \max_{f_x \in \mathcal{F}}{\ell_x(\s', \sigma_\brokenshort{})}\text.
\end{align}
By \Cref{lemma:min-receives-more}, the facility $f_p$ receives at most a weight of $\sigma_\brokenshort(\s', v_i)_p \leq \frac{w(v_i)}{N_{\s'}(v_i)} = \sigma_\feldmanshort(\s', v_i)_p$ from each client $v_i \in A_{\s'}(f_p)$.
Thus, $\ell_p(\s', \sigma_\brokenshort{}) \leq \ell_p(\s', \sigma_\feldmanshort{}) \leq (2+2\epsilon)\ell_{l, \feldmanshort}(\s, \sigma_\feldmanshort{}) \leq (2+2\epsilon)\ell_p(\s, \sigma_\brokenshort{})$, where the last part holds by \Cref{lemma:min-broken}.
This means that if \Cref{eq:assumption-most} is true, the gain of facility $f_p$ is limited to a factor of $(2+2\epsilon)$.
Therefore, $f_p$ can only improve by a factor of more than $(2+2\epsilon)$ if another facility $f_x$ exists for which $\ell_p(\s', \sigma_\brokenshort{}) < \ell_x(\s', \sigma_\brokenshort{})$.

With the goal of finding an upper bound on $\ell_x(\s', \sigma_\brokenshort{})$, we investigate the move of $f_p$ in two parts:
First, the removal of $f_p$ resulting in $\s_r$ with the client equilibrium $\sigma_r$ in the \brokenmodel{}, second the reinsertion of $f_p$ in her new position.
By \Cref{lemma:no-loss-removal}, all facility utilities in $(\s_r, \sigma_r)$ have not decreased from the utilities in $(\s, \sigma_\brokenshort{})$.
Since the sum of utilities among the non-removed facilities increases by at most $\ell_p(\s, \sigma_\brokenshort{})$, the maximum utility gain of $f_x$ is at most
$$\ell_x(\s_r, \sigma_r) \leq \ell_x(\s, \sigma_\brokenshort{}) + \ell_p(\s, \sigma_\brokenshort{})\text.$$
By the inverse of \Cref{lemma:no-loss-removal}, adding $f_p$ in her new position cannot result in a utility gain for $f_x$ and so
$\ell_x(\s', \sigma_\brokenshort{}) \leq \ell_x(\s_r, \sigma_r)$.
Thus, we have
\begin{align*}
\ell_x(\s', \sigma_\brokenshort{}) &\leq \ell_x(\s, \sigma_\brokenshort{}) + \ell_p(\s, \sigma_\brokenshort{})\\
\ell_p(\s', \sigma_\brokenshort{}) &\leq \ell_x(\s, \sigma_\brokenshort{}) + \ell_p(\s, \sigma_\brokenshort{})\\
\ell_p(\s', \sigma_\brokenshort{}) &\leq (3+2\epsilon)\ell_p(\s, \sigma_\brokenshort{})
\end{align*}
concluding the proof.
\end{proof}

We will show that there exists a simple FPTAS to compute a $(1+\epsilon)$-approximate equilibrium in the \feldmanmodel{} which immediately yields the following theorem.
\begin{theorem}
 A $(3+2\epsilon)$-approximate SPE in the \brokenmodel{} can be computed in polynomial time.
\end{theorem}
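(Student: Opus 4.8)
The plan is to reduce this statement to computing a $(1+\epsilon)$-approximate SPE in the \feldmanmodel{} in polynomial time, since by the preceding theorem any such state is automatically a $(3+2\epsilon)$-approximate SPE in the \brokenmodel{}. The key simplification is that in the \feldmanmodel{} the client stage is a fixed deterministic rule rather than an equilibrium to be solved: given a placement $\s$, facility $f_j$ simply receives $\ell_j(\s) = \sum_{v_i \in A_\s(f_j)} w(v_i)/|N_\s(v_i)|$. Hence an SPE of the \feldmanmodel{} is nothing but a pure Nash equilibrium of the one-shot facility game with these payoffs, and a best response of a single facility is computable in polynomial time by evaluating all $n$ candidate locations in $V$. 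Once a $(1+\epsilon)$-approximate such placement $\s$ is found, I would output it together with the unique \brokenmodel{} client equilibrium for $\s$, which is computable in polynomial time by \Cref{theo:polytime-client}; the preceding theorem then guarantees that the resulting state is a $(3+2\epsilon)$-approximate SPE of the \brokenmodel{}. So everything reduces to the FPTAS for the \feldmanmodel{} placement game.

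The first substantive step is to exhibit this facility game as an exact potential game of market-sharing type. Treating every vertex $v_i$ as a resource of value $w(v_i)$ that is split equally among the $c_i(\s) = |N_\s(v_i)|$ facilities reaching it, the per-facility payoff on $v_i$ is $w(v_i)/c_i(\s)$, and I claim that $\Phi(\s) = \sum_{v_i \in V} w(v_i) H_{c_i(\s)}$, with $H_m = \sum_{r=1}^m 1/r$, is an exact potential. This follows from the standard Rosenthal telescoping: relocating $f_p$ adds $w(v_i)/(c_i(\s)+1)$ to $\Phi$ for each newly reached client and removes $w(v_i)/c_i(\s)$ for each abandoned one, which is exactly the change in $\ell_p$. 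Note that $\Phi$ is bounded, $0 \le \Phi(\s) \le H_k\, w(V)$, and strictly increases along any improving move, which also reconfirms existence of exact equilibria (hence of $(1+\epsilon)$-approximate ones).

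The FPTAS itself is approximate best-response dynamics: repeatedly pick a facility that can increase its load by a factor exceeding $1+\epsilon$, move it to its best response, and terminate when no such facility exists, which is precisely a $(1+\epsilon)$-approximate equilibrium. Each move raises $\Phi$ by the mover's absolute gain, so the only thing left to establish is that the number of moves is polynomial, and this is the step I expect to be the main obstacle: a facility with tiny current load could in principle perform many $(1+\epsilon)$-improving moves whose absolute potential increase is negligible. I would resolve this using the bounded-jump property of the payoffs, namely $w(v_i)/(c+1) \ge \tfrac12\, w(v_i)/c$, so that one extra competitor at most halves a facility's share on any client and losses and gains from a single relocation stay commensurate with the mover's load. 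Combined with the fact that all facilities share the same strategy set $V$ (symmetry) and the rule of always moving the facility with the largest relative gain, this yields the standard polynomial bound on the number of $\epsilon$-moves for symmetric congestion games, of order $\mathrm{poly}(n,k,1/\epsilon)$ times the logarithm of the potential range $\Phi_{\max}/\Phi_{\min}$, which is bounded by the input bit-size. The potential identity, its boundedness, and the reduction back to the \brokenmodel{} are then routine.
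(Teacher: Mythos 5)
Your architecture coincides with the paper's: use the preceding approximation theorem to reduce everything to computing a $(1+\epsilon)$-approximate equilibrium of the \feldmanmodel{} placement game, note that this game is an exact potential game with Rosenthal potential $\Phi(\s)=\sum_{v\in V} w(v)H_{|N_\s(v)|}$, run $(1+\epsilon)$-best-response dynamics, and output the resulting placement together with the unique \brokenmodel{} client equilibrium from \Cref{theo:polytime-client}. The one step where you diverge --- bounding the number of moves --- is also the one place with a genuine gap as written: the ``standard polynomial bound on the number of $\epsilon$-moves for symmetric congestion games'' that you invoke (the Chien--Sinclair-style bounded-jump result) is a statement about \emph{cost-minimization} congestion games with a decreasing potential, whereas the facility game here is a \emph{payoff-maximization} (market-sharing/covering) game with an increasing potential; it cannot be cited black-box, and your observation that ``losses and gains from a single relocation stay commensurate with the mover's load'' is not yet the inequality the argument needs, namely a lower bound on the mover's gain in terms of the \emph{potential}.

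Your ingredients do suffice to close this, but the assembly has to be carried out: by symmetry any facility may co-locate with the facility of maximum current load $\ell_{\max}$; by the bounded-jump property ($w(v)/(c+1)\ge\frac{1}{2}\,w(v)/c$) this earns at least $\ell_{\max}/2$; and since $\Phi(\s)\le H_k\sum_{j}\ell_j(\s)\le k\,H_k\,\ell_{\max}$, every $(1+\epsilon)$-best-response step (by \emph{any} player --- your largest-gain rule turns out to be unnecessary) raises $\Phi$ by a multiplicative factor $1+\Omega\bigl(\tfrac{\epsilon}{k\log k}\bigr)$. This gives a bound of order $\tfrac{k\log k}{\epsilon}\log\bigl(\Phi_{\max}/\Phi_{0}\bigr)$, where $\Phi_0$ is the potential after the first improving move, i.e., a weakly polynomial bound depending on the weights' bit-size. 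The paper's lemma instead compares against the potential-maximizing equilibrium $\s^*$: any best response earns at least $\tfrac{1}{n^2}w(A_{\s^*}(\mathcal{F}))\ge\Phi(\s^*)/(n^2H_n)$, so each step increases $\Phi$ \emph{additively} by $\Omega\bigl(\tfrac{\epsilon}{n^2\log n}\Phi(\s^*)\bigr)$, and since $\Phi\le\Phi(\s^*)$ throughout, the dynamics stops after $O(\tfrac{1}{\epsilon}n^2\log n)$ steps --- a strongly polynomial bound, independent of the weights and of the move-selection rule. So your route does prove the theorem once the maximization analogue is actually derived, but as submitted it rests on an inapplicable citation, and even when repaired it yields a weaker bound than the paper's self-contained argument.
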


As an algorithm to compute an $(1+\epsilon)$-approximate equilibrium in the \feldmanmodel{},  we employ approximate best response dynamics (see \Cref{alg:approx-br}).  Here we iteratively let facilities switch locations if they improve the payoff by a factor of at least $1+\epsilon$.


\begin{algorithm}[t]
    \caption{Approximate Best Response Dynamics}
    \label{alg:approx-br}
    $\s \gets$ arbitrary facility placement profile\;
    \While{ $\exists f_j \in \mathcal{F},$ with its best response $s_j' \in V$ and $\ell_j((s_j',s_{-j}),\sigma_\feldmanshort{}) \ge (1+\epsilon) \ell_j(\s,\sigma_\feldmanshort{})$}{
        $\s \gets (s_j',s_{-j})$\;
    }
\end{algorithm}

\begin{theorem}
There is a FPTAS to compute a $(1+\epsilon)$-approximate equilibrium in the \feldmanmodel{}.
\end{theorem}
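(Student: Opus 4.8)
The plan is to exploit the fact that the \feldmanmodel{} facility stage is an exact potential game and that \Cref{alg:approx-br} is nothing but approximate best-response dynamics on this potential. Concretely, I would use the fair-sharing (Rosenthal-type) potential
\[
  \Phi(\s) \;=\; \sum_{v_i \in V} w(v_i)\, H_{|N_\s(v_i)|},
  \qquad H_m = \textstyle\sum_{t=1}^{m} \tfrac1t,\ H_0 = 0 .
\]
The key structural lemma is that $\Phi$ is an \emph{exact} potential for the facility game. When a facility $f_j$ moves from $s_j$ to $s_j'$, every client it leaves has its neighbourhood count drop by one and every client it newly reaches has its count rise by one; since in the \feldmanmodel{} facility $f_j$ collects exactly $\frac{w(v_i)}{|N_\s(v_i)|}$ from each client $v_i$ in range, the telescoping identities $H_c-H_{c-1}=\tfrac1c$ give $\Phi(s_j',s_{-j})-\Phi(\s)=\ell_j(s_j',s_{-j},\sigma_\feldmanshort{})-\ell_j(\s,\sigma_\feldmanshort{})$, i.e.\ the change in $\Phi$ equals the change in the mover's utility. (This also re-establishes existence of an SPE in the \feldmanmodel{}.)

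\emph{Correctness} of the output is then immediate: the while-loop of \Cref{alg:approx-br} terminates precisely when no facility $f_j$ has a location $s_j'$ with $\ell_j((s_j',s_{-j}),\sigma_\feldmanshort{}) \ge (1+\epsilon)\,\ell_j(\s,\sigma_\feldmanshort{})$, which is exactly the definition of a $(1+\epsilon)$-approximate SPE in the \feldmanmodel{} (the client side is in equilibrium by construction, since the uniform split is evaluated in each step). Moreover each iteration is cheap: the uniform client distribution and hence every facility load is given in closed form, so a best deviation for each facility can be found by checking all $|V|$ locations, making the per-iteration cost polynomial.

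The substance of the proof is the \emph{iteration bound}. Each performed $\epsilon$-improving move raises the mover's utility by a factor $\ge 1+\epsilon$, hence raises $\Phi$ by $\Delta\Phi \ge \epsilon\,\ell_j(\s,\sigma_\feldmanshort{})$. I would first record two absolute bounds: $\Phi(\s)\le H_k\, w(V)$ always, and any \emph{positive} facility load is at least $\tfrac{w_{\min}}{k}$ (it contains a term $\tfrac{w(v_i)}{|N_\s(v_i)|}$ with $w(v_i)\ge w_{\min}$ and $|N_\s(v_i)|\le k$), where $w_{\min}$ is the smallest positive client weight. Since every move we actually make strictly increases the mover's load, its gain is at least $\tfrac{\epsilon\, w_{\min}}{k}$, bounding the number of moves by $\tfrac{k H_k\, w(V)}{\epsilon\, w_{\min}}$ — polynomial in $k$, $1/\epsilon$ and the \emph{magnitude} of the weights. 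To upgrade this to a genuine FPTAS, i.e.\ a bound polynomial in the input \emph{bit-length}, I would convert the additive gain into a \emph{multiplicative} increase of $\Phi$: using $\sum_x \ell_x(\s,\sigma_\feldmanshort{}) = w(A_\s(\mathcal F)) \ge \Phi(\s)/H_k$ together with the fact that the maximum-load facility carries at least a $1/k$ fraction of the total load, one shows that selecting in each step the improving move of largest resulting load multiplies $\Phi$ by a factor $\ge 1+\tfrac{\epsilon}{k H_k}$. As $\Phi$ lies in $[w_{\min},\,H_k\,w(V)]$ whenever some client is served, this yields at most $O\!\big(\tfrac{k H_k}{\epsilon}\log\tfrac{H_k\,w(V)}{w_{\min}}\big)$ iterations, which is polynomial in the input size and $1/\epsilon$.

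The hard part will be precisely this last step — guaranteeing that \emph{every} performed move gains a polynomial fraction of the current potential. A naive move by a tiny-load facility gains little, so the argument must rely on the chosen selection rule (largest resulting load) and on the decreasing-returns structure of uniform sharing (adding facilities near a location can only lower the load obtainable there). The delicate case is when the current maximum-load facility has no $(1+\epsilon)$-improving move while a low-load facility does; I would handle it by showing that such moves can be charged to the eventual increase of the served weight $w(A_\s(\mathcal F))$, so that they cannot recur super-polynomially often before either a high-gain move becomes available or a $(1+\epsilon)$-equilibrium is reached. Pinning down this charging cleanly, rather than the routine potential bookkeeping, is where the real care is needed.
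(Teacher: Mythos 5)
Your potential function, algorithm, and correctness argument coincide with the paper's, and your pseudo-polynomial bound of roughly $\frac{kH_k\,w(V)}{\epsilon\, w_{\min}}$ iterations is sound. The genuine gap is exactly where you say the ``real care'' is needed, and the fix you sketch does not close it. Your upgrade to a multiplicative bound needs every executed step to gain an $\Omega(\frac{\epsilon}{kH_k})$ fraction of the \emph{current} potential $\Phi(\s)$, which fails in the case you flag: when only low-load facilities have $(1+\epsilon)$-improving moves, the largest-resulting-load improving move can be arbitrarily small relative to $\Phi(\s)$. The proposed repair --- charging such moves to ``the eventual increase of the served weight $w(A_\s(\mathcal F))$'' --- cannot work as stated, because served weight is not monotone under improving moves: a facility that abandons a client it covered alone in order to grab a share of a heavier, already-covered client increases its own load (and $\Phi$) while strictly decreasing $w(A_\s(\mathcal F))$. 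So no iteration bound polynomial in the input bit-length is actually established, and that bound is the entire content of the FPTAS claim.

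The paper closes this gap with a different comparison point: instead of the current potential, it compares each step's gain to the \emph{maximum} potential. Let $\s^*$ be the profile maximizing $\Phi$ (hence an equilibrium). Any facility, in any state, can move to the location occupied in $\s^*$ by the facility $f_p$ that covers the most weight there; by uniform sharing it then collects at least a $\frac{1}{n}$ fraction of $w(A_{\s^*}(f_p)) \geq \frac{1}{n} w(A_{\s^*}(\mathcal F)) \geq \frac{1}{n H_n}\Phi(\s^*)$. Hence every best response has payoff at least $\frac{1}{n^2 H_n}\Phi(\s^*)$, so every $(1+\epsilon)$-best-response step (the dynamics must use approximate \emph{best} responses rather than arbitrary improving moves --- your largest-resulting-load selection rule would also suffice here) raises $\Phi$ by the additive, state-independent amount $\frac{\epsilon}{1+\epsilon}\cdot\frac{1}{n^2 H_n}\Phi(\s^*)$. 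Since $\Phi \leq \Phi(\s^*)$ throughout, the dynamics terminate after $\mathcal{O}(\frac{1}{\epsilon} n^2 \log n)$ steps --- a strongly polynomial bound with no dependence on the weight magnitudes at all, which is precisely the kind of per-step progress guarantee your charging argument was trying, and failing, to manufacture.
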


\begin{proof}
By its stopping condition, \Cref{alg:approx-br} clearly computes a $(1+\epsilon)$-approximate equilibrium. As for the runtime, 
each improvement step can be performed in polynomial time iterating over all facilities and all their strategies.  Note that computing the cost of a player for each profile can also be done in polynomial time (cf. \Cref{theo:polytime-client}). 
Using the following lemma to bound the overall number of steps completes the proof.
\end{proof}
\begin{lemma}

Every sequence of $(1+\epsilon)$-best response steps in the \feldmanmodel{}
 converges in $\mathcal{O}(\frac{1}{\epsilon} k^2 \log k )$ steps.
\end{lemma}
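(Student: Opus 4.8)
The plan is to exhibit the facility stage of the \feldmanmodel{} as an exact potential game and then turn monotone potential progress into a polynomial step bound. First I would observe that in the \feldmanmodel{} the load of a facility is $\ell_j(\s,\sigma_\feldmanshort{}) = \sum_{v_i \in A_\s(f_j)} \frac{w(v_i)}{|N_\s(v_i)|}$, i.e., each client $v_i$ acts as a shared resource that pays $\frac{w(v_i)}{c}$ to every facility covering it when $c := |N_\s(v_i)|$ facilities are present. This is exactly the payoff structure of a congestion game, so the Rosenthal-type function $\Phi(\s) = \sum_{v_i \in V} w(v_i)\, H_{|N_\s(v_i)|}$, with $H_m = \sum_{t=1}^m \frac1t$ the $m$-th harmonic number and $H_0 = 0$, is an exact potential: when $f_j$ relocates, every client it newly covers raises its coverage count by one and contributes $+\frac{w(v_i)}{c+1}$ to both $\Phi$ and to $\ell_j$, and symmetrically for clients it leaves, so $\Phi(\s') - \Phi(\s) = \ell_j(\s',\sigma_\feldmanshort{}) - \ell_j(\s,\sigma_\feldmanshort{})$. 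Hence every $(1+\epsilon)$-improving step strictly increases $\Phi$, which already gives the FIP; the whole content of the lemma is the quantitative bound on the number of steps.

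Next I would set up the two quantities that drive the counting. Writing the social welfare as $\mathrm{SW}(\s) = \sum_j \ell_j(\s,\sigma_\feldmanshort{}) = \sum_{v_i\,:\,N_\s(v_i)\neq\emptyset} w(v_i)$, the bounds $1 \le H_m \le 1+\ln m \le 1+\ln n$ for $1\le m\le n$ give the sandwich $\mathrm{SW}(\s) \le \Phi(\s) \le (1+\ln n)\,\mathrm{SW}(\s)$. A $(1+\epsilon)$-improving move of $f_j$ contributes a potential gain of at least $\frac{\epsilon}{1+\epsilon}\ell_j(\s',\sigma_\feldmanshort{})$. I would also reuse the structural observation already exploited in the $(3+2\epsilon)$-approximation argument: any facility can co-locate with the current highest-load facility $f_h$ and thereby secure load at least $\tfrac12 \ell_h$, because adding one facility to a shared client at most halves its share; since $\ell_h \ge \mathrm{SW}/k$, \emph{every} facility always has an available move to load $\ge \frac{\mathrm{SW}(\s)}{2k}$. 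Together with the sandwich this lets me relate an attainable gain to the current value of $\Phi$.

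I would then convert additive progress into a multiplicative, weight-independent count by a phase decomposition on the maximum load attained so far, $\bar M_t := \max_{s\le t}\max_j \ell_j$, which is monotone non-decreasing; a phase is a doubling of $\bar M$. Inside a single phase all loads, and hence $\Phi$, lie in a multiplicative window of width $O(k\log n)$, so a step whose mover reaches a constant fraction of the current maximum raises $\Phi$ by a factor $1+\Omega\!\big(\tfrac{\epsilon}{k\log n}\big)$, and at most $O\!\big(\tfrac{k\log n}{\epsilon}\big)$ such \emph{substantial} steps fit in a phase. Bounding the number of phases and combining the per-phase counts—together with the control of tiny moves discussed next—is where the target polynomial $O(\tfrac1\epsilon n^2\log n)$ should emerge, using $k\le n$ without loss of generality.

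The part I expect to be the main obstacle is ruling out long runs of \emph{tiny} improving moves, in which the mover's new load is a vanishing fraction of the current maximum: such moves make only negligible potential progress, and for an \emph{arbitrary} move order the pure additive-potential argument degrades to a pseudopolynomial, weight-dependent bound. The plan to overcome this is to exploit the discreteness of the coverage counts $|N_\s(v_i)|\in\{0,1,\dots,k\}$ together with the sandwich to show that only $O(n)$ moves can separate two consecutive substantial increases of $\Phi$, so that tiny moves cannot recur indefinitely; this bookkeeping is what contributes the extra factor of $n$ and yields the stated $O(\tfrac1\epsilon n^2\log n)$ bound. Verifying this step carefully—ensuring the claimed window width and the $O(n)$ spacing of substantial moves hold simultaneously and independently of the weights—is the delicate technical core of the proof.
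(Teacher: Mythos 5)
Your potential function is exactly the paper's (the Rosenthal potential $\Phi(\s)=\sum_{v\in V} w(v)\,H_{|N_\s(v)|}$), and so is the observation that a $(1+\epsilon)$-improving step of $f_j$ raises $\Phi$ by at least $\frac{\epsilon}{1+\epsilon}\,\ell_j(\s',\sigma_\feldmanshort{})$. The gap is in how you lower-bound that gain. You anchor the best-response payoff to the \emph{current} profile: co-locating with the current maximum-load facility secures load at least $\mathrm{SW}(\s)/(2k)$. This only yields \emph{multiplicative} progress, $\Phi(\s')\geq (1+\Omega(\tfrac{\epsilon}{k\log n}))\,\Phi(\s)$, and hence a step count of order $\tfrac{k\log n}{\epsilon}\log\frac{\Phi(\s^*)}{\Phi(\s_0)}$. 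That last ratio is weight-dependent: the initial profile may cover arbitrarily little weight compared to what is coverable, so no bound of the form $O(\tfrac1\epsilon n^2\log n)$ follows. Your phase decomposition inherits exactly the same defect --- the number of doublings of $\bar M_t$ is itself the logarithm of a ratio of weights --- and the proposed patch via discreteness of the coverage counts $|N_\s(v_i)|$ cannot close it, because the obstruction lives at the scale of the (arbitrary rational) weights, not of the counts. Incidentally, your worry about ``tiny'' moves is vacuous for the dynamics in question: by your own co-location argument, every best response reaches at least half of the current maximum load, so under $(1+\epsilon)$-best-response steps no mover ever ends at a vanishing fraction of the current maximum. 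The sole problem is the weight-dependent number of phases.

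The missing idea --- and the route the paper takes --- is to anchor the per-step gain to a \emph{fixed} profile instead of the current one. Let $\s^*$ maximize $\Phi$, and let $f_p$ be the facility covering the most client weight in $\s^*$; by averaging, $w(A_{\s^*}(f_p))\geq \tfrac1k\, w(A_{\s^*}(\mathcal{F}))$. Any agent, at any time, may relocate to $s^*_p$, and each client in that attraction range then yields at least a $\tfrac1k$ fraction of its weight, since it is shared by at most $k$ facilities. Hence \emph{every} best response has payoff at least $\tfrac{1}{k^2}\,w(A_{\s^*}(\mathcal{F}))$, which with $k\leq n$ (as the paper implicitly assumes) and your own sandwich $\Phi(\s^*)\leq H_n\, w(A_{\s^*}(\mathcal{F}))$ is at least $\tfrac{1}{n^2 H_n}\,\Phi(\s^*)$. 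Consequently every $(1+\epsilon)$-best-response step raises $\Phi$ \emph{additively} by at least $\frac{\epsilon}{1+\epsilon}\cdot\frac{1}{n^2 H_n}\,\Phi(\s^*)$, a fixed fraction of the maximal potential; since $\Phi$ never exceeds $\Phi(\s^*)$, the sequence has length at most $\frac{1+\epsilon}{\epsilon}\,n^2 H_n=O(\tfrac1\epsilon n^2\log n)$. No phases, no bookkeeping of small moves, and no dependence on the weights are needed; replacing your current-profile anchor by this fixed-profile anchor is precisely the one step your argument is missing.
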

\begin{proof}

For the \feldmanmodel{}, we have that $\Phi(\s)=\sum_{v \in V} \sum_{j=1}^{\left|N_\s(v)\right|} \frac{w(v)}{j}$ is an exact potential function that increases with each improving move of a facility exactly by the difference  of the improvement~\cite{Rosenthal}. That is,  if a facility $f_j$ improves from $\s$ by changing from $s_j$ to $s'_j$ with an improvement of $\Delta := \ell_j(\s,\sigma_\feldmanshort{}) - \ell_j((s_{-j},s'_j),\sigma_\feldmanshort{})$,  then $\Phi(\s) - \Phi(s_{-j},s'_j) = \Delta$.

We now prove the lemma by bounding the number of best response steps until we reach an approximate equilibrium. To that end,  let $\s^*$ be the equilibrium that maximizes the exact potential function $\Phi(\cdot)$. 

Note that an agent could always choose the location of the facility $f_p$ that covers the most client weight in $\s^*$.  That is $p = \arg\max_{\{1,\ldots,k\}} w(A_{\s^*}(f_j))$. By an averaging argument, that weight is at least $\frac{1}{k}$-th of the total weight $w(A_{\s^*}(\mathcal{F}))$ covered in $\s^*$.
Hence,  any best response of an agent yields a payoff of at least $\frac{1}{k}$-th the weight covered by $f_p$, which is  $\frac{1}{k}$ of the total load in $\s^*$.  So, the payoff of a best response is at least   $\frac{1}{k^2} w(A_{\s^*}(\mathcal{F}))$.

On the other hand,  $\Phi(\s^*)$ is at most $H_k$ times the total covered weight, i.e.,  $\Phi(\s^*) \le H_k w(A_{\s^*}(\mathcal{F}))$.  
Putting both together yields that the payoff of a best response is at least $\frac{1}{k^2 H_k} \Phi(\s^*)$.

As we are considering only best responses that increase a player's payoff by a factor of $1+\epsilon$,  every step improves the payoff (and,  hence,  the potential function) by at least 
$ \frac{\epsilon}{1+\epsilon} \frac{1}{k^2 H_k } \Phi(\s^*) $.

Therefore,  every sequence of $(1+\epsilon)$-best responses reaches an approximate equilibrium after at most $\mathcal{O}(\frac{1}{\epsilon} k^2 \log k )$ steps.
\end{proof}

Notably the $(1+\epsilon)$-factor is unavoidable,  as computing an exact equilibrium in the \feldmanmodel{} is PLS-complete.
\begin{theorem}
Computing an exact equilibrium in the \feldmanmodel{} is PLS-complete.
\end{theorem}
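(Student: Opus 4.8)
The plan is to establish the two halves of PLS-completeness in turn: membership in PLS and PLS-hardness.

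For membership, I would observe that the \feldmanmodel{} is a congestion game in the sense of Rosenthal. Viewing the host graph as bipartite between clients and locations, the resources are the clients, a facility placed at $s_j$ uses exactly the clients of its attraction range $A_\s(f_j)$, and the payoff a facility draws from a client $v$ that is covered by $n_v$ facilities is $w(v)/n_v$. The exact potential $\Phi$ introduced above then certifies the required local-search structure: a candidate solution is a placement profile $\s$, its value is $\Phi(\s)$, and the neighborhood of $\s$ consists of all profiles obtained by relocating a single facility. Since $\Phi$ is an exact potential, a profile is locally optimal precisely when no facility has an improving relocation, i.e.\ exactly when it is an equilibrium. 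Because $\Phi$ can be evaluated in polynomial time (its values are rationals of polynomially bounded bit-length, as the load-dependent denominators divide $\mathrm{lcm}(1,\dots,k)$, which has $O(k)$ bits) and its neighborhood can be searched for an improving move in polynomial time, the problem lies in PLS.

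For hardness, I would give a PLS-reduction from a canonical PLS-complete problem, namely the computation of a pure Nash equilibrium in a congestion game. The reduction exploits the full freedom of the bipartite host graph: each location can be made to cover an arbitrarily prescribed set of clients, so that the location chosen by a facility encodes the strategy chosen by a player of the source instance. Concretely, I would build (i) an \emph{anchor gadget} of private, high-weight clients that forces the $k'$ facilities to spread out into distinct roles, one per player of the source game, thereby overcoming the fact that the \feldmanmodel{} is symmetric; (ii) \emph{strategy gadgets} so that, within its role, a facility chooses among locations representing the strategies of the associated source player; and (iii) \emph{shared clients} placed across role gadgets that emulate the shared resources of the source game. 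Correctness then amounts to showing that a profile is an equilibrium of the constructed instance if and only if the induced configuration is a pure equilibrium of the source instance, with both directions of the correspondence computable in polynomial time.

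The step I expect to be the main obstacle is emulating arbitrary resources with the rigid payoff structure of the \feldmanmodel{}. Every client contributes the fixed proportional profile $w(v)/j$ to each of the $j$ facilities covering it, whereas a general congestion game prescribes an arbitrary non-increasing cost as a function of the load; moreover the coverage of a location is fixed and load-independent, so a single client can never realize a general profile. I would therefore emulate each source resource by a bundle of clients whose coverage is designed, through the bipartite structure, to be shared unevenly among the role-facilities that may use that resource, so that the aggregated $w/j$ contributions reproduce the prescribed cost values on the only relevant load range $\{1,\dots,k'\}$. Calibrating these bundles so that the emulation is simultaneously consistent with the uniform client splitting of the \feldmanmodel{} and with the role-anchoring gadget is the delicate part of the construction; once the gadgets are tuned, the equivalence of equilibria follows from the potential-function correspondence established for membership.
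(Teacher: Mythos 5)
Your PLS-membership argument is essentially the paper's (exact potential, polynomial-time evaluation, polynomial-time neighborhood search) and is fine. The genuine gap is in the hardness half, at exactly the step you flag as ``the delicate part'': emulating the resources of a general congestion game by client bundles is not just hard to calibrate, it is provably impossible, so a reduction from pure Nash equilibria in \emph{general} congestion games cannot be made to work this way. Since a facility's payoff is linear in client weights, any gadget whose per-facility payoff depends only on the number $j$ of facilities using it can be symmetrized, and the resulting anonymous payoff profiles $u(j)$ form a very restricted cone: the total payoff $j\,u(j)$ extracted from a gadget equals the weight of the clients covered, hence is non-decreasing in $j$ (so $u(j)\ge u(1)/j$), and the profile must in addition have non-increasing successive drops (for three potential users, the realizable cone is spanned by $(1,1,1)$, $(2,\tfrac32,1)$ and $(1,\tfrac12,\tfrac13)$, which forces $u(1)-u(2)\ge u(2)-u(3)$ and $u(1)-u(2)\le 3\,(u(2)-u(3))$). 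These constraints are invariant under the additive shift $u_r(j)=K-c_r(j)$, because they involve only differences, so no choice of offsets rescues the construction. But the known PLS-hardness constructions for congestion games rely precisely on profiles outside this cone: threshold (``cliff'') resources whose cost is flat and then jumps when the last player arrives, i.e.\ concave payoff profiles, or resources whose payoff collapses by much more than the proportional factor when a second user appears. Your fallback of sharing bundle clients ``unevenly'' does not escape this either: if the resulting payoffs are still anonymous, the symmetrization argument applies; if they are not, the gadget no longer emulates a congestion-game resource and the correspondence of local optima breaks down. Restricting the source to congestion games whose cost functions \emph{are} realizable essentially restricts it to uniform market-sharing games---the very class whose hardness is being proven---so that route is circular.

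The paper avoids this trap by picking a source problem tailored to the one profile uniform sharing provides natively, namely $(w,\,w/2)$ on loads $\{1,2\}$: it reduces from {\sc LocalMaxCut} with the flip neighborhood, which is PLS-complete. Each MaxCut vertex $u$ becomes a gadget with two weight-$0$ locations (\emph{left} $=$ ``$u\in C$'', \emph{right} $=$ ``$u\notin C$'') anchored by heavy dummy clients of weight $M$ so that every equilibrium places exactly one facility per gadget---this is precisely your anchor/role idea, which is sound---and each MaxCut edge $\{u,v\}$ becomes weight-$w_e$ clients whose shopping ranges pair sides of $u$'s gadget with sides of $v$'s gadget, so that, after the $2M$ terms cancel, a facility's incentive to switch sides inside its gadget is exactly its vertex's incentive to flip in {\sc LocalMaxCut}. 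If you want to salvage your outline, keep the anchor and strategy-location architecture but replace the source problem by one whose resources never need to penalize sharing by more than the factor $\tfrac1j$ and never need cliffs; {\sc LocalMaxCut}, with loads only in $\{1,2\}$, is exactly such a problem.
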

\begin{proof}
The problem is in PLS since we can compute  the potential function value for each facility placement profile in polynomial time and we can find a better solution in polynomial time by iterating over all unilateral deviations.

To prove hardness, we reduce from {\sc LocalMaxCut},  the local search version of {\sc MaxCut},  which is PLS-hard~\cite{schaffer1991simple,elsasser2011settling}.  That is, given a graph $G=(V,E)$ with edge weights $w_e$,  find a set $C \subseteq V$ such that the value of the cut,  i.e.,  $v(C):=\sum_{u\in C} \sum_{v \in V\setminus C} w_{(u,v)}$ cannot be improved by adding or removing one vertex to or from $C$. 

Given an instance of {\sc LocalMaxCut} with a graph $G=(V,E)$ with edge weights $w_e$,  we construct an instance of the \feldmanmodel{} on a host graph $H=(V',E',w')$ with $k=|V|$ facilities such that from an equilibrium in $H$ we can easily construct a local optimum of $G$.  For ease of exposition, we let $H$ be an undirected graph.  One can easily obtain an equivalent directed graph by duplicating edges.

For every vertex $v \in V$, there is a vertex gadget consisting of the five nodes {\em left}$_v$,  {\em right}$_v$,  {\em dummy}$1_v$,  {\em dummy}$2_v$,  and {\em dummy3}$_v$. The three dummy vertices have weight $M$,  the other two vertices have weight $0$.  There is an edge from {\em dummy1} to {\em left},  from {\em dummy2} to {\em right} and edges from {\em dummy3} to both,  {\em left} and {\em right}.

For every edge $e =(u,v) \in E$,  there are two edge vertices $v1_e$ and $v2_e$ each with weight $w_e$.  There is an edge from  $v1$ to the {\em right} vertex of the vertex gadget for $u$ and the {\em left} vertex of the vertex gadget of $v$.
 There is an edge from $v2$ to the {\em right} vertex of the vertex gadget for $v$ and the {\em left} vertex of the vertex gadget of $u$.
 
 As we have exactly $k=|V|$ facilities, it is easy to verify that in every equilibrium there is exactly one facility on either the {\em left} or the {\em right} vertex of each edge with a payoff of at least $2M$.  Note that more than one player in a vertex gadget or choosing a {\em dummy} or {\em edge} vertex gives significantly less payoff.
 
 It remains to show that we can determine a local optimum of the {\sc MaxCut} instance from any equilibrium in polynomial time. 
 We interpret an equilibrium profile as a {\sc LocalMaxCut} solution as follows: We define that  every vertex $v \in V$ where a facility is on the left node of the corresponding gadget is in $C$.  

In an equilibrium, the payoff of a player on a vertex gadget $u \in C$ is
$ 2M + \sum_{v \in \delta(u) \setminus C}w_{(u,v)}$.
For every player on a vertex gadget $u \not\in C$, the payoff is $2M +  \sum_{v \in \delta(u) \cap C} w_{(u,v)}$, where $\delta(u)$ is the set of neighboring nodes of $u$.
 
Since this is an equilibrium, deviating from the right to the left node within a gadget is not an improvement. Hence, the payoff for each player on a vertex gadget $u \in C$ is
$ 2M +  \sum_{v \in \delta(u) \setminus C}w_{(u,v)} \ge 
2M +  \sum_{v \in \delta(u) \cap C} w_{(u,v)}$.

Likewise, for each player on $u \not\in C$ the payoff is
$ 2M +  \sum_{v \in \delta(u) \cap C} w_{(u,v)} \ge
2M +  \sum_{v \in \delta(u) \setminus C}w_{(u,v)}$.
\end{proof}

\section{Conclusion}
We have shown that in our model of two-sided facility location subgame perfect equilibria are not always guaranteed to exist. This is in stark contrast to the model of \citet{KLMS21} in which clients exhibit a simpler behavior and merely perform load balancing. To resolve non-existence we studied approximate equilibria and showed the existence and polynomial time computability of approximate equilibria.

A major open problem is whether the approximation factors can be improved. We conjecture that $2$-approximate equilibria exist and that our approximation algorithm computes them.  On the negative side, a close inspection of our constructions in \Cref{theo:no-spe,theo:NP} shows already that there do not exist $\alpha$-approximate equilibria for a suitable small constant $\alpha$ and that the corresponding decision problem is intractable.  It would be very interesting to obtain a matching lower bound to the existence result.

\ifdefined\arxiv
\printbibliography
\else
\bibliography{results}
\fi

\ifdefined\arxiv
\appendix

\clearpage
\section*{Appendix}

\section{Existence of Client Equilibria}
\label{sec:existence}

We prove the existence of client equilibria with Kakutani's fixed-point theorem.
For that, we first show that a client always has just one unique best response.

\begin{lemma}
\label{lemma:unique-br}
A client $v_i$ has a unique best response for a given client distribution $\sigma$.
\end{lemma}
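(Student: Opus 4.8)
The plan is to recognize the best-response computation of client $v_i$ as a strictly convex quadratic program over a compact convex feasible set, which therefore has a unique optimizer. First I would fix the distributions $\sigma_{-i}$ of all other clients. For each facility $f_j \in N_\s(v_i)$, write $x_j := \sigma(\s,v_i)_j$ for the decision variables and let $c_j := \sum_{i' \neq i}\sigma(\s,v_{i'})_j$ be the residual load contributed by the other clients, which is a constant for this subproblem. Then the load experienced at $f_j$ is $\ell_j = c_j + x_j$, and since $x_j = 0$ for every facility outside $N_\s(v_i)$, the client's cost reduces to
\[
L_i = \sum_{j \in N_\s(v_i)} x_j(c_j + x_j) = \sum_{j \in N_\s(v_i)} \bigl(x_j^2 + c_j x_j\bigr),
\]
a function of $x = (x_j)_{j \in N_\s(v_i)}$ alone.

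Next I would observe that feasibility requires $x$ to lie in the scaled simplex $\Delta = \{x \geq 0 : \sum_{j} x_j = w(v_i)\}$ (or, if $N_\s(v_i) = \emptyset$, the trivial singleton $\{\mathbf{0}\}$, for which the claim is immediate). The set $\Delta$ is nonempty, convex, and compact, and the objective above is a quadratic whose Hessian is $2I$, hence positive definite; thus $L_i$ is continuous and strictly convex on $\Delta$.

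Finally, uniqueness follows from the standard argument. A continuous function attains its minimum on the nonempty compact set $\Delta$, so a best response exists; and if $x$ and $y$ were two distinct minimizers with common value $m$, then their midpoint $\tfrac12(x+y)$ is again feasible and, by strict convexity, has strictly smaller cost than $m$, a contradiction.

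The main point to get right---rather than a genuine obstacle---is that the client feels her own contribution to each facility's load, which is exactly what produces the quadratic self-term $x_j^2$ and hence the strict convexity driving uniqueness. Without this term the objective would be merely linear in $x$ and the best response would in general fail to be unique.
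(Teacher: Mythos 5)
Your proposal is correct and takes essentially the same approach as the paper: both fix $\sigma_{-i}$, write the client's cost as $\sum_j \bigl(x_j^2 + c_j x_j\bigr)$ where $c_j$ is the constant residual load of the other clients, and derive uniqueness from strict convexity via the midpoint argument on the simplex. The only cosmetic difference is that you certify strict convexity through the positive-definite Hessian $2I$, whereas the paper verifies the midpoint inequality directly using $a^2 + b^2 > 2ab$ for $a \neq b$.
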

\begin{proof}
Given a client distribution $\sigma$, let the two responses $p=((p_1, \dots, p_k), \sigma_{-i})$ and $q=((q_1, \dots, q_k), \sigma_{-i})$ of client $v_i$ have the same cost.
Let the response $r=((r_1, \dots, r_{k}), \sigma_{-i})$, with $r_i = \frac{p_i + q_i}2$ be the midpoint of $p$ and $q$.
Since the strategy space of $v_i$ is a simplex, $r$ is a valid strategy.
The cost of client $v_i$ in $p$ (and analogously for q) is
\begin{align*}
L_i(\s,p) &= \sum_{j=1}^k p(\s,v_i)_j \ell_j(\s,p) \\
&= \sum_{j=1}^k p(\s,v_i)_j (p(\s,v_i)_j + \ell_j(\s,\sigma_{-i})) \\
&= \sum_{j=1}^k \left(p(\s,v_i)_j^2 +p(\s,v_i)_j \ell_j(\s,\sigma_{-i})\right)\text.
\end{align*}
For better readability we use $p_{i,j} = p(\s,v_i)_j$ and $q_{i,j} = q(\s,v_i)_j$ for the rest of this proof.
Now, we limit the cost of client $v_i$ at the midpoint $r$ to
\begin{align*}
L_i(\s,r) &= \sum_{j=1}^k \left(r(\s,v_i)_j^2 +r(\s,v_i)_j \ell_j(\s,\sigma_{-i})\right)\\
&= \sum_{j=1}^k \left(\frac{\left(p_{i,j} + q_{i,j}\right)^2}{4} +\frac{p_{i,j} + q_{i,j}}2 \ell_j(\s,\sigma_{-i})\right)\\
&< \sum_{j=1}^k \left(\frac{2p_{i,j}^2 + 2q_{i,j}^2}{4} +\frac{p_{i,j} + q_{i,j}}2 \ell_j(\s,\sigma_{-i})\right)\\
&= \frac{L_i(\s,p) + L_i(\s,q)}{2}\text.
\end{align*}
The second step is due to $a^2 + b^2 > 2ab \to 2a^2 + 2b^2 > (a+b)^2$, for $a\neq b$, and since $p$ and $q$ differ in at least one component.
Thus, $r$ has a lower cost than $p$ and $q$ for $v_i$.
This means, that a client $v_i$ cannot have two distinct best responses for a given $\sigma_{-i}$, as the midpoint of these best responses would have lower cost.
\end{proof}

Then, we prove the following statement with Kakutani's fixed-point theorem.
\begin{theorem}
A client equilibrium exists in all instances for all facility placement profiles $\s$.
\end{theorem}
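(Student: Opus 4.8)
The plan is to apply Kakutani's fixed-point theorem to the best-response correspondence of the client game, with the facility placement profile $\s$ held fixed throughout. First I would pin down the strategy space. For a client $v_i$ with $N_\s(v_i) \neq \emptyset$, the set of feasible distributions is
\[
\Delta_i = \Bigl\{ x \in \mathbb{R}_+^k : \textstyle\sum_{f_j \in N_\s(v_i)} x_j = w(v_i),\ x_j = 0 \text{ for } f_j \notin N_\s(v_i) \Bigr\},
\]
which is a simplex scaled by $w(v_i)$, hence non-empty, compact, and convex; for $N_\s(v_i) = \emptyset$ we set $\Delta_i = \{\mathbf{0}\}$. The joint strategy space $X = \prod_{v_i \in V} \Delta_i$ is then a non-empty, compact, convex subset of $\mathbb{R}^{nk}$, being a finite product of such sets.

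Next I would define the best-response correspondence $\phi : X \to 2^X$ by $\phi(\sigma) = \prod_{v_i \in V} \operatorname{BR}_i(\sigma_{-i})$, where $\operatorname{BR}_i(\sigma_{-i}) = \arg\min_{\sigma_i' \in \Delta_i} L_i(\s, (\sigma_i', \sigma_{-i}))$. To invoke Kakutani I must check three properties. Non-emptiness of the values follows since each $L_i$ is continuous and $\Delta_i$ is compact, so a minimizer exists by Weierstrass. Convexity of the values is immediate from \Cref{lemma:unique-br}: each $\operatorname{BR}_i(\sigma_{-i})$ is a singleton, hence trivially convex. For the closed-graph (upper hemicontinuity) condition I would observe that $\ell_j(\s,\sigma) = \sum_i \sigma(\s,v_i)_j$ is linear in $\sigma$, so $L_i(\s,\sigma) = \sum_j \sigma(\s,v_i)_j \ell_j(\s,\sigma)$ is a polynomial and thus jointly continuous on $X$; since the constraint correspondence $\sigma_{-i} \mapsto \Delta_i$ is constant (hence continuous and compact-valued), Berge's Maximum Theorem yields that $\operatorname{BR}_i$ is upper hemicontinuous with closed graph, and taking the product over all clients preserves this.

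Finally, Kakutani's theorem applies and produces a fixed point $\sigma^* \in \phi(\sigma^*)$. Unwinding the definition, $\sigma_i^* \in \operatorname{BR}_i(\sigma_{-i}^*)$ for every client $v_i$, which is precisely the defining condition of a client equilibrium. I expect the only genuinely delicate step to be the verification of the closed-graph condition; everything else is routine once the strategy space is identified as a product of scaled simplices. That step, however, reduces cleanly to continuity of the polynomial cost functions together with compactness of the (constant) constraint sets, so Berge's Maximum Theorem disposes of it. In fact, since \Cref{lemma:unique-br} makes each best response unique, $\phi$ is single-valued, so one could equally finish via Brouwer's theorem applied to the continuous best-response function; I would keep Kakutani for uniformity with the stated approach.
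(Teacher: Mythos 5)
Your proof is correct and takes essentially the same approach as the paper: Kakutani's fixed-point theorem applied to the product best-response correspondence on the product of scaled simplices, with the uniqueness-of-best-response lemma (\Cref{lemma:unique-br}) supplying non-empty, singleton (hence convex) values. The only difference is in polish: where the paper asserts the closed-graph condition directly from continuity of the best responses as minimizers of quadratics, you justify it via Berge's Maximum Theorem, and your remark that single-valuedness would let one finish with Brouwer instead is likewise consistent with the paper's argument.
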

\begin{proof}
The strategy space $S_i$ of a single client $v_i$ is an $|N(v_i)|$-simplex, where $N(v_i)$ is the the set of facilities in her shopping range.
Then the set $S = S_1 \times \dots \times S_n$ is non-empty, compact and convex.
We define the function $\phi: S \to 2^S$ to be $\phi(x) = \text{BR}_1(x_{-1}) \times \dots \times \text{BR}_n(x_{-n})$, where $\text{BR}_i(x_{-i})$ is the set of best responses of client $i$, given the facility placement profile $x_{-i}$ of all other clients.
By \Cref{lemma:unique-br}, $|BR_i(x_{-i})| = 1$ for all possible input.
Therefore, it also holds that $|\phi(x)| = 1$ for all $x$ and thus, $\phi(x)$ is always non-empty and convex.
$\phi$ has a closed graph, since it is continuous, as the best response of each player is the minimum of a quadratic function.
According to Kakutani's fixed-point theorem, $\phi$ must therefore have a fixed point $z$.
Since in $z$ all clients play there best response, $z$ is a client equilibrium.
\end{proof}

\section{Uniqueness of Client Equilibria}
\label{sec:uniqueness}
To prove the uniqueness of Nash equilibria in our client stage, we reduce it to an \emph{atomic splittable routing game}. 
However, we use the definition by \citet{bhaskar2015uniqueness} including the extension that the authors define in Section 5 of their paper, that each player may have her own individual delay function for each edge.
In that way, we can prohibit the use of edges that correspond to facilities not in the shopping range of a client by setting sufficiently bad delay functions.
For the reduction, we first give a definition of the atomic splittable routing game:
\begin{definition}[Atomic Splittable Routing Game \cite{bhaskar2015uniqueness}]
Given a graph $G=(V,E)$, let there be $k$ players with each player $i$ defined by $(w_i, s_i, t_i)$.
Each player $i$ routes $w_i$ units of flow from $s_i \in V$ to $t_i \in V$ which she can split arbitrarily among all $s_i$-$t_i$-paths.
Let $f$ be the total flow in the graph, $f_e$ the amount of flow through an edge $e$ and $f_e^i$ the amount of flow that player $i$ sends through edge $e$.
For each combination of player $i$ and edge $e$, there is a delay function $l_e^i(f_e)$.
The cost of player $i$ is $\sum_{e\in E}{f_e^il_e^i(f_e)}$.
\end{definition}
Now we prove that the \brokenmodel{} is isomorphic to a game that fulfills the conditions that are necessary to prove uniqueness of client equilibria.
\begin{lemma}
For a given facility placement profile $\s$ the client stage of the \brokenmodel{} is isomorphic to an atomic splittable routing game with a generalized nearly parallel graph and nonnegative, nondecreasing, differentiable, and convex delay functions.
\end{lemma}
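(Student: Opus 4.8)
The plan is to exhibit an explicit atomic splittable routing game whose players, flows, and costs mirror the clients, weight distributions, and total waiting times of the \brokenmodel{} client stage, and then to verify that it meets the two hypotheses of the lemma (graph topology and delay-function shape). First I would build the network $G$: introduce a single source $s$ and a single sink $t$, and for every facility $f_j \in \mathcal{F}$ add one edge $e_j$ in parallel from $s$ to $t$. Every client $v_i$ with $N_\s(v_i) \neq \emptyset$ becomes a player who must route $w(v_i)$ units from $s$ to $t$; clients with empty shopping range distribute nothing and do not influence any facility load, so I simply drop them (equivalently give them demand $0$). Under the identification $f_{e_j}^i = \sigma(\s,v_i)_j$, the total flow on $e_j$ equals the facility load $\ell_j(\s,\sigma)$, so choosing the identity delay $l_{e_j}^i(x)=x$ on every in-range edge makes player $i$'s cost $\sum_j f_{e_j}^i l_{e_j}^i(f_{e_j})$ equal to $\sum_j \sigma(\s,v_i)_j \ell_j(\s,\sigma) = L_i(\s,\sigma)$, exactly her total waiting time.

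Next I would use the player-specific delay functions (the extension of \citet{bhaskar2015uniqueness} invoked in the text) to enforce the shopping-range restriction. For an out-of-range pair, i.e. $f_j \notin N_\s(v_i)$, I set $l_{e_j}^i(x) = x + M$ for a constant $M > 2w(V)$. Since $N_\s(v_i)\neq\emptyset$, player $i$ always has an in-range edge on which her marginal cost is $f_{e_{j'}} + \sigma(\s,v_i)_{j'} \le w(V)+w(v_i) \le 2w(V) < M$, while the marginal cost of the first unit she would place on an out-of-range edge is at least $M$; hence in any equilibrium she places zero flow on every out-of-range edge. Consequently the flow on $e_j$ is contributed solely by in-range clients, so $f_{e_j}=\ell_j(\s,\sigma)$ as required and the cost identity above is exact. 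All delay functions used are affine with nonnegative slope, so they are nonnegative on $\mathbb{R}_+$, nondecreasing, differentiable, and convex, matching the second hypothesis.

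It then remains to verify that the correspondence is an isomorphism and that $G$ lies in the required topology class. For the former, the map $\sigma \mapsto f$ is a cost-preserving bijection between feasible client distributions and flows supported on in-range edges, and by the previous paragraph every equilibrium flow has exactly this support; hence client equilibria and routing equilibria are in cost-preserving bijection, which is all the uniqueness argument behind \Cref{theo:uniqueness} needs. For the topology, $G$ is a parallel-link network (all players share the pair $(s,t)$ and every $s$--$t$ path is a single edge), the simplest member of the generalized nearly parallel class of \citet{bhaskar2015uniqueness}, and I would confirm membership by checking $G$ against their structural definition.

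The main obstacle I anticipate is precisely this last topological verification together with the subtlety that player-specific delays formally enlarge each player's strategy space (she may nominally route on out-of-range edges) relative to the client stage. I must argue that the prohibitive delays make those extra strategies never part of any best response, so that the two games coincide on the equilibrium-relevant domain and the word \emph{isomorphic} is justified, and I must ensure that the parallel-link network with player-specific convex delays genuinely falls inside the class for which \citet{bhaskar2015uniqueness} establish uniqueness rather than a class where uniqueness can fail. Both points are where the definition of ``generalized nearly parallel'' has to be matched carefully rather than merely invoked.
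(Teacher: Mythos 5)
Your proposal is essentially the paper's own proof: the paper likewise reduces the client stage to the routing game of \citet{bhaskar2015uniqueness} with player-specific delays, using the identity delay $l(x)=x$ for in-range client--facility pairs and a prohibitive affine penalty (there, $x + 3w_\s(\mathcal{F})$, playing the role of your $x+M$) for out-of-range pairs, and then shows via an $\epsilon$-transfer argument (equivalent to your marginal-cost argument) that no client ever uses an out-of-range path, so the games coincide on the equilibrium-relevant strategies. The only cosmetic difference is that the paper avoids parallel edges---your flagged worry---by routing each facility through an intermediate vertex $f_j$ with edges $(s,f_j)$ and $(f_j,t)$, where the second edge has zero delay.
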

\begin{proof}
We assume that all clients have at least one facility within their shopping range, as otherwise we can remove such clients without affecting the rest of the game.
We define $G'=(V',E')$ with $V'=\{s, t\} \cup \{\mathcal{F}\}$ and have two edges $(s,f_j)$ and $(f_j, t)$, for each facility $f_j \in \mathcal{F}$.
For each client $v_i$ we add a flow from $s$ to $t$ with weight $w(v_i)$, as $(w_i, s_i, t_i) = (w(v_i), s, t)$.
For each pair of a client $v_i$ and a facility $f_j$ with $f_j \in N_\s(v_i)$, we set the delay function of edge $(s, f_j)$ to $l_{(s, f_j)}^i(x)=x$.
For each client $v_i \in V$ and each facility $f_j \in \mathcal{F}$ we set $l_{(f_j, t)}^i(x)=0$.
Otherwise, we set $l_e^i(x)=x+3w_\s(\mathcal{F})$.
All functions $l_e^i$ are nonnegative, nondecreasing, differentiable, and convex and the graph $G'$ is generalized nearly parallel.

Next, we assume towards contradiction that a client $v_i$ puts a weight of $\epsilon$ on path $(s, f_j, t)$ with $f_j \notin N_\s(v_i)$.
Let $(s, f_p, t)$ be a path, with $f_p \in N_\s(v_i)$.
Since the second edge of each path has a delay of 0, the cost of $v_i$ for these two paths are $C = f^i_{(s, f_j)}l_{(s, f_j)}^i(f_{(s, f_j)}) + f^i_{(s, f_p)}l_{(s, f_p)}^i(f_{(s, f_p)}) = \epsilon(3w_\s(\mathcal{F}) + f_{(s, f_j)}) + f^i_{(s, f_p)}f_{(s, f_p)}$.
By moving this weight of $\epsilon$ to $(s, f_p, t)$, her costs are $C' = 0 \cdot l_{(s, f_j)}^i(f_{(s, f_j)} - \epsilon) + (\epsilon + f^i_{(s, f_p)})l_{(s, f_p)}^i(f_{(s, f_p)}+\epsilon) = (\epsilon + f^i_{(s, f_p)})(f_{(s, f_p)}+\epsilon) = \epsilon (\epsilon + f^i_{(s, f_p)} + f_{(s, f_p)}) + f^i_{(s, f_p)}f_{(s, f_p)} < \epsilon(3w_\s(\mathcal{F})) + f^i_{(s, f_p)}f_{(s, f_p)}\leq C$.
Thus, a client can always decrease her cost by not using a path corresponding to a facility not in her shopping range.
Therefore, the clients have isomorphic strategy spaces.
The cost functions in both games are also isomorphic since the second edge of each path has no delay and thus, the games are isomorphic.
\end{proof}

Now, we apply a theorem by \citet{bhaskar2015uniqueness}.
Note that $\mathscr{L}$ refers to the class of functions that are nonnegative, nondecreasing, differentiable, and convex.

\begin{theorem}[Theorem 6 by \citet{bhaskar2015uniqueness}]
Let $(G, \{(v_1, s_1, t_1), (v_2, s_2, t_2), \dots, (v_k, s_k, t_k)\}, l)$ be an atomic splittable routing game, where $l_e^i \in \mathscr{L}$ $\forall e \in E$. If graph $G$ is a generalized nearly parallel graph, then there is a unique equilibrium.
\end{theorem}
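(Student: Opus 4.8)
The plan is to follow the variational approach that is standard for uniqueness results of this kind, as in \citet{bhaskar2015uniqueness}. The first step is to characterize equilibria by first-order conditions. Because every $l_e^i\in\mathscr{L}$ is nonnegative, nondecreasing and convex, the map $f_e^i\mapsto f_e^i\,l_e^i(f_e)$ is convex in player $i$'s own edge flow: its second derivative $2(l_e^i)'(f_e)+f_e^i(l_e^i)''(f_e)$ is a sum of nonnegative terms. Hence each player's cost is convex on the fixed strategies of the others, and her best response is governed by the KKT conditions, namely that on every $s_i$-$t_i$ path she uses the marginal path cost $\sum_{e}\bigl(l_e^i(f_e)+f_e^i(l_e^i)'(f_e)\bigr)$ is equalized and minimal. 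Aggregating these conditions over all players expresses an equilibrium as a solution of a variational inequality $\langle F(x^*),x-x^*\rangle\ge 0$ for all feasible $x$, where $F$ collects the player-specific marginal costs $F_e^i(x)=l_e^i(f_e)+f_e^i(l_e^i)'(f_e)$.

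From here the natural route to uniqueness is strict monotonicity of $F$, which forces any two solutions of the variational inequality to coincide. The crux, and the reason the statement restricts the topology, is that for atomic splittable games $F$ is \emph{not} monotone in general and equilibria genuinely fail to be unique on arbitrary networks; the generalized-nearly-parallel hypothesis is precisely what rescues the argument. I would therefore run an induction over the construction of a generalized nearly parallel graph. In the base case of a parallel-link network the aggregate marginal cost is strictly increasing in each edge's total flow, so the total-flow vector is pinned down, and the player-specific equalization conditions then determine each individual flow. For the inductive step I would collapse each series block into an effective edge carrying an aggregated, still convex, player-specific delay, thereby reducing a generalized nearly parallel instance to a parallel one while checking that uniqueness is preserved under each permitted graph operation.

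Concretely, to rule out two distinct equilibria $f$ and $g$ I would study the difference flow $h=f-g$. Since $f$ and $g$ route the same demands, $h$ is a circulation and decomposes into cycles, and the restricted topology tightly limits which cycles can occur. Testing the variational inequality for $f$ against $g$ and for $g$ against $f$ and adding the two yields $\langle F(f)-F(g),\,f-g\rangle\le 0$, whereas edge-wise monotonicity of the $l_e^i$ gives the reverse inequality term by term, so every edge on which $h$ is nonzero must force a contradiction unless $h$ vanishes identically. The main obstacle is exactly the series/merge operations, where several players' flows interact inside a shared subnetwork: there the plain edge-wise monotonicity estimate is too weak, and one must exploit the parallel-like structure to show that the difference flow cannot support a nontrivial cycle, forcing $h=0$. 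Combining this uniqueness statement with the isomorphism established in the preceding lemma then yields uniqueness of the client equilibrium, i.e. \Cref{theo:uniqueness}.
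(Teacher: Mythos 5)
This statement is not proved in the paper at all: it is Theorem~6 of \citet{bhaskar2015uniqueness}, quoted verbatim in the appendix and used as a black box. The paper's ``proof'' is the citation itself; the only work the paper actually does is the reduction in the preceding lemma (checking that the client stage maps to an atomic splittable routing game on a generalized nearly parallel graph with delay functions in $\mathscr{L}$, using the player-specific extension from Section~5 of that reference). So your proposal is not an alternative to anything in the paper --- it is an attempt to reprove an external result that the paper deliberately imports.

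Judged on its own terms, your sketch has a genuine gap exactly where the theorem lives. The setup (convexity of own cost, KKT/marginal-cost characterization, the variational inequality with $F_e^i(x)=l_e^i(f_e)+f_e^i(l_e^i)'(f_e)$) is fine, but the step ``edge-wise monotonicity of the $l_e^i$ gives the reverse inequality term by term'' is false in general: the operator $F$ contains the term $f_e^i(l_e^i)'(f_e)$ and is \emph{not} monotone, which is precisely why equilibria are non-unique on general graphs --- if your term-by-term estimate held, the topological hypothesis would be superfluous. You acknowledge this (``the plain edge-wise monotonicity estimate is too weak''), but then the entire content of the theorem --- the induction over the generalized nearly parallel construction, the proof that series collapse preserves uniqueness, and the argument that the difference circulation cannot support a nontrivial cycle at merge points --- is asserted as a plan rather than carried out. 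Two further technical problems: the base case claims the aggregate marginal cost is \emph{strictly} increasing, but $\mathscr{L}$ only requires nondecreasing functions (the paper's own reduction uses the constant-zero delay $l_{(f_j,t)}^i(x)=0$, for which your strictness fails), so even the parallel-link case needs the more careful argument from the literature; and collapsing a series block into ``an effective edge carrying an aggregated, still convex, player-specific delay'' is not a valid reduction when different players route different amounts through the block, since the effective delay a player faces depends on the within-block split of the other players, not just on the block's total load. If you need this theorem, cite it as the paper does; reproving it is a substantial project, not a paragraph.
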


With that, we get the uniqueness of our client stage.
\uniqueness*

\section{Computation of Client Equilibria}
\label{sec:computation}

To compute client equilibria for a given facility placement profile $\s$, we reduce to \emph{atomic splittable singleton congestion games} by \citet{harks_equilibrium_2021}.

\begin{definition}[Atomic Splittable Singleton Congestion Game~\cite{harks_equilibrium_2021}]
An atomic splittable congestion game is a tuple $\mathcal{G}=\left(N,E, (d_i)_{i \in N}, (E_i)_{i \in N}, (c_{i,e})_{i \in N, e \in {E_i}}\right)$, with a set of players $N=\{1, \dots,n\}$ and a set of resources $E=\{e_1,\dots, e_m\}$.
Each player $i \in N$ has a weight $d_i \in \mathbb{Q}_{\geq 0}$ and a set of allowable resources $E_i \subseteq E$.
A strategy for a player $i \in N$ is a distribution of the weight $d_i$ over her allowable resources $E_i$, with $x_{i,e}$ being the weight of player $i$ on resource $e$.
The load $x_e = \sum{i \in N}{x_{i,e}}$ of a resource $e \in E$ is the sum of weights of the players on that resource.
The cost of player $i \in N$ is $\pi_i(x)= \sum_{e \in E_i}{c_{i,e}(x_e)x_{i,e}}$, with $c_{i,e}(x_e)=a_{i,e}x_e + b_{i,e}$ being a player-specific affine cost function, where $a_{i,e} \in \mathbb{Q}_{> 0}$ and $b_{i,e} \in \mathbb{Q}_{\geq 0}$.
\end{definition}

For the reduction, we use the facilities as resources in the atomic splittable singleton congestion game and connect them with the players according to their shopping ranges.

\begin{lemma}
\label{lem:harks-reduction}
The client stage of the \brokenmodel{} is isomorphic to an atomic splittable singleton congestion game~\cite{harks_equilibrium_2021}.
\end{lemma}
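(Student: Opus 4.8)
The plan is to exhibit an explicit, cost-preserving bijection between the states of the client stage of the \brokenmodel{} (for a fixed placement $\s$) and the strategy profiles of a suitably chosen atomic splittable singleton congestion game in the sense of \citet{harks_equilibrium_2021}. Since a cost-preserving isomorphism of strategy spaces carries equilibria to equilibria, establishing such a map is exactly what is needed to inherit their equilibrium-computation machinery. The key observation making the reduction clean is that the waiting-time cost $L_i(\s,\sigma)=\sum_j \sigma(\s,v_i)_j\,\ell_j(\s,\sigma)$ is already of the congestion-game form: each client's contribution to a facility is multiplied by that facility's total load, so the only cost functions we will need are the identity maps, which are singleton (each facility is a single resource, and a client fractionally distributes her weight over the individual facilities in range).

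First I would remove, exactly as in the uniqueness reduction, every client $v_i$ with $N_\s(v_i)=\emptyset$: such a client distributes no weight and contributes nothing to any facility load, so deleting her leaves all facility loads and all remaining clients' costs unchanged. This lets me assume every client has a nonempty shopping range and hence becomes a genuine player with a nonempty allowable-resource set. I then set up the game: the resources are the facilities, $E=\mathcal{F}$; the players are the surviving clients; player $v_i$ gets weight $d_i=w(v_i)\in\mathbb{Q}_{\ge 0}$ and allowable resources $E_i=N_\s(v_i)$. A strategy of $v_i$ distributes $w(v_i)$ over the facilities in her shopping range, which matches precisely the feasibility condition on $\sigma$, so the identification $x_{i,j}=\sigma(\s,v_i)_j$ is a bijection between feasible weight distributions and strategy profiles, under which the resource load $x_{e_j}=\sum_i x_{i,j}$ equals the facility load $\ell_j(\s,\sigma)$.

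It remains to fix the cost functions. I would take $c_{i,e_j}(y)=y$ for every player $i$ and facility $f_j\in E_i$, i.e. $a_{i,e_j}=1$ and $b_{i,e_j}=0$, which lie in the admissible affine class required by the definition (positive leading coefficient, nonnegative intercept); these need not be genuinely player-specific, but the player-specific framework certainly accommodates them. With this choice,
\[
\pi_i(x)=\sum_{e_j\in E_i} c_{i,e_j}(x_{e_j})\,x_{i,j}=\sum_{f_j\in N_\s(v_i)} \ell_j(\s,\sigma)\,\sigma(\s,v_i)_j = L_i(\s,\sigma),
\]
where the last equality uses that $\sigma(\s,v_i)_j=0$ for $f_j\notin N_\s(v_i)$. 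Hence the bijection preserves every player's cost, and the two games are isomorphic.

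The step requiring the most care is purely bookkeeping rather than mathematical: handling the edge case of empty shopping ranges so that every player has a nonempty allowable set, and checking that the chosen identity cost functions genuinely satisfy $a_{i,e}\in\mathbb{Q}_{>0}$ and $b_{i,e}\in\mathbb{Q}_{\ge 0}$. Beyond that, the argument is a routine term-by-term matching, since the identity cost function reproduces the waiting-time cost $\sum_j \sigma(\s,v_i)_j\,\ell_j(\s,\sigma)$ directly.
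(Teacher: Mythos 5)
Your proof is correct and takes essentially the same approach as the paper: facilities as resources, clients as players with $d_i = w(v_i)$ and allowable resources $E_i = N_\s(v_i)$, and player-specific affine cost functions specialized to the identity ($a_{i,e}=1$, $b_{i,e}=0$), yielding $\pi_i(x) = L_i(\s,\sigma)$. Your explicit removal of clients with empty shopping ranges and the written-out cost-preservation check are minor refinements of bookkeeping that the paper's proof leaves implicit, not a different argument.
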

\begin{proof}
Let $N = V$, let $E=\mathcal{F}$, and for each $v_i \in V$, let $E_i = N_\s(v_i)$ and $d_i = w(v_i)$.
Furthermore, we set $a_{i,e} = 1$ and $b_{i,e} = 0$, for each $v_i \in V, e \in \mathcal{F}$, such that the cost of a player is $\pi_i(x) = x_e x_{i,e}$.
Since for these sets of parameters, the set of players (clients), the set of resources (facilities), the cost functions and the strategy spaces are exactly the same, both games are isomorphic to each other.
\end{proof}

Since our client game is isomorphic to the atomic splittable singleton congestion game, the algorithm by \citet{harks_equilibrium_2021} to compute an equilibrium applies to our game as well.
Note that $\delta$ is an upper bound on the maximum weight of the players and $k_0$ is the packet size for a discretized version of their game, for which the support sets (i.e., the pairs of clients $v_i$ and facilities $f_j$ with non-zero weight $\sigma(\s, v_i)_j$ in a client equilibrium $\sigma$) are equal to the original game.
Most importantly, $\log (\delta/k_0)$ is polynomial in the input size.
Additionally, in their model $n$ is the number of clients and $m$ is the number of resources, i.e., facilities in our model.

\begin{theorem}[Theorem 8 by \citet{harks_equilibrium_2021}]
Given game $\mathcal{G}$, we can compute an atomic splittable equilibrium for $\mathcal{G}$ in running time: $\mathcal{O}\left(
(nm)^3 + n^2m^{14} \log (\delta/k_0)\right)$.
\end{theorem}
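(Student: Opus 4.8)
The plan is to prove the running-time bound by reducing equilibrium computation to the identification of the correct \emph{support structure} --- the set of pairs $(i,e)$ with $x_{i,e}>0$ --- and then solving the resulting linear system. I would begin with the first-order (KKT / variational inequality) characterization of an atomic splittable equilibrium for singleton strategy spaces. Since the cost of player $i$ is $\pi_i(x)=\sum_{e\in E_i}c_{i,e}(x_e)x_{i,e}$ with affine $c_{i,e}(x_e)=a_{i,e}x_e+b_{i,e}$, the marginal cost of player $i$ on resource $e$ is $c_{i,e}(x_e)+a_{i,e}x_{i,e}=a_{i,e}(x_e+x_{i,e})+b_{i,e}$. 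A profile is an equilibrium exactly when, for every player $i$, this marginal cost equals a common value $\mu_i$ on all resources in her support and is at least $\mu_i$ on all unused resources in $E_i$. Because each $\pi_i$ is convex in player $i$'s own variables, these conditions are both necessary and sufficient, and the equilibrium is unique (consistent with \Cref{theo:uniqueness}).

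Given the correct support $S\subseteq\{(i,e)\}$, the equilibrium is then the solution of a linear system: one marginal-cost equation $a_{i,e}(x_e+x_{i,e})+b_{i,e}=\mu_i$ per pair in $S$, one flow-conservation equation $\sum_e x_{i,e}=d_i$ per player, together with the definitions $x_e=\sum_i x_{i,e}$. This system has $\mathcal{O}(nm)$ variables and equations, so Gaussian elimination solves it in $\mathcal{O}((nm)^3)$ time, which accounts for the first term of the bound.

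The remaining --- and principal --- task is to determine $S$ itself. Here I would employ a scaling / discretization scheme: approximate the splittable game by a sequence of integer-packet games, routing flow in packets and greedily assigning each packet to the resource of currently least marginal cost. The key lemma to prove is that once the packet size is small enough relative to the maximum weight $\delta$, the discrete equilibrium has \emph{the same support} as the continuous equilibrium, so that $S$ is recovered exactly; one then solves the linear system once to obtain the fractional values. Running the scaling from coarse packets down to size $k_0$ introduces $\log(\delta/k_0)$ phases, and bounding the number of packet reroutings and support updates within each phase by a polynomial in $n$ and $m$ yields the $n^2m^{14}\log(\delta/k_0)$ term. The hard part is precisely this support-recovery argument: showing that the combinatorial packet dynamics stabilize to the correct support, and controlling how many times a packet may migrate across the up-to-$m$ resources as the scaling refines --- which is where the large exponent in $m$ originates.
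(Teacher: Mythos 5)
First, a framing point: the paper does not prove this statement at all --- it is Theorem~8 of \citet{harks_equilibrium_2021}, imported verbatim as a black box, and the paper's own contribution in this section is only \Cref{lem:harks-reduction}, the isomorphism that makes the theorem applicable to the client stage of the \brokenmodel{}. So your proposal has to be measured against the original proof of \citet{harks_equilibrium_2021}, and in outline you have reconstructed its architecture correctly: the first-order characterization (marginal cost $a_{i,e}(x_e+x_{i,e})+b_{i,e}$ equal to a player-specific value $\mu_i$ on the support, no smaller off it), the observation that once the support is known the equilibrium solves a linear system of size $\mathcal{O}(nm)$ --- which is exactly the source of the $\mathcal{O}((nm)^3)$ term, and matches the paper's remark that the involved step is determining the client--facility pairs with non-zero weight --- and the recovery of the support via a family of integrally-splittable (packet) games with packet size scaled down to $k_0$, giving the $\log(\delta/k_0)$ factor.

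However, as a proof the proposal has a genuine gap: everything that carries the second term of the bound is named rather than established. Two lemmas do all the work and both are missing. (i) \emph{Support recovery:} you assert that for small enough packets the discrete equilibrium has the same support as the continuous one, but give no bound on the threshold $k_0$; to know that $\log(\delta/k_0)$ is polynomial in the encoding length one needs a quantitative sensitivity argument (in the cited paper, Cramer-type determinant bounds on the support linear system for rational data, certifying that distinct supports force loads to differ by an amount bounded below in terms of the input size). (ii) \emph{Per-phase work:} writing that ``bounding the number of packet reroutings by a polynomial in $n$ and $m$ yields the $n^2m^{14}$ term'' restates the theorem instead of proving it; the actual argument must show that discrete equilibria exist for each packet size, that after refining the packet size a new discrete equilibrium is reached by rerouting along structured sequences in which each player moves a bounded number of packets, and an invariant controlling load drift --- indeed, your greedy rule of sending each packet to the resource of least marginal cost does not obviously terminate in a discrete equilibrium without such an invariant. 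Without (i) and (ii) neither the exponent $14$ nor even polynomiality of the second term can be derived, so the proposal is a correct skeleton of the Harks--Timmermans approach whose load-bearing steps --- precisely the content of the cited theorem --- remain unproved.
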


With the same algorithm we may compute client equilibria in our game.

\computation*
\fi

\end{document}